\theoremstyle{plain} 
\newtheorem{theorem}             {Theorem}  [section]
\newaliascnt{corollary}{theorem}
\newtheorem{corollary}[corollary]{Corollary}
\newaliascnt{proposition}{theorem}
\newtheorem{proposition}[proposition]{Proposition}
\theoremstyle{definition}
\newaliascnt{definition}{theorem}
\newtheorem{definition}[definition]{Definition}
\newaliascnt{example}{theorem}
\newtheorem{example}[example]{Example}
\theoremstyle{remark}
\newaliascnt{remark}{theorem}
\newtheorem{remark}[remark]{Remark}
\begin{document}
\title[Hodge decomposition and the Shapley value]{Hodge decomposition and the Shapley value of a cooperative game}
\author{Ari Stern}
\address{Department of Mathematics and Statistics, Washington University in St.~Louis}
\email{stern@wustl.edu}

\author{Alexander Tettenhorst}
\email{sasha.tettenhorst@wustl.edu}

\subjclass[2010]{91A12, 05C90}

\maketitle

\begin{abstract}
  We show that a cooperative game may be decomposed into a sum of
  component games, one for each player, using the combinatorial Hodge
  decomposition on a graph. This decomposition is shown to satisfy
  certain efficiency, null-player, symmetry, and linearity
  properties. Consequently, we obtain a new characterization of the
  classical Shapley value as the value of the grand coalition in each
  player's component game. We also relate this decomposition to a
  least-squares problem involving inessential games (in a similar
  spirit to previous work on least-squares and minimum-norm solution
  concepts) and to the graph Laplacian. Finally, we generalize this
  approach to games with weights and/or constraints on coalition
  formation.
\end{abstract}

\section{Introduction}

In cooperative game theory, one of the central questions is that of
fair division: if players form a coalition to achieve a common goal,
how should they split the profits (or costs) of that achievement among
themselves? (We restrict our attention to transferable utility games,
also called TU games, whose total value may be freely divided and
distributed among the players.) \citet{Shapley1953} introduced one of
the classical solution concepts to this problem, now known as the
\emph{Shapley value}, which he proved to be the unique allocation that
satisfies certain axioms.

In this paper, we show that a cooperative game may be decomposed into
a sum of component games, one for each player, where these components
are uniquely defined in terms of the combinatorial Hodge decomposition
on a hypercube graph associated with the game.  (That is, the value is
apportioned among the players for each possible coalition, not just
the grand coalition consisting of all players.)  We prove that the
Shapley value is precisely the value of the grand coalition in each
player's component game.

This characterization of the game components and the Shapley value
also implies two equivalent characterizations: one in terms of the
least-squares solution to a linear problem, whose solution is exact if
and only if the game is inessential; the other in terms of the graph
Laplacian. The first of these two characterizations is related to the
least-square and minimum-norm solution concepts of \citet{RuVaZa1998}
and \citet{KuSa2007}.

Furthermore, since the combinatorial Hodge decomposition holds for
arbitrary weighted graphs, this decomposition of cooperative games
also generalizes to cases where edges of the hypercube graph are
weighted or removed altogether. This may be seen as modeling variable
willingness or unwillingness of players to join certain coalitions, as
in some models of restricted cooperation. In the latter case, we
compare the resulting solution concepts with other ``Shapley values''
for games with cooperation restrictions, such as the precedence
constraints of \citet{FaKe1992} and the even more general digraph
games of \citet{KhSeTa2016}.

We note that the combinatorial Hodge decomposition has recently been
used to decompose noncooperative games (\citet{CaMeOzPa2011}) and has
also been applied to other problems in economics, such as ranking of
social preferences (\citet{JiLiYaYe2011,HiKaWa2011}). Here, we show
that it can also lend insight to cooperative game theory.

\section{Preliminaries}

\subsection{Cooperative games and the Shapley value}
\label{sec:introGames}

A \emph{cooperative game} consists of a finite set $N$ of players and
a function $ v \colon 2 ^N \rightarrow \mathbb{R} $, which assigns a
value $ v (S) $ to each  coalition $ S \subset N $, such that
$ v ( \emptyset ) = 0 $. Assuming that all players cooperate (forming
the ``grand coalition'' $N$), the question of interest is how to split
the value $ v (N) $ among the players.

The Shapley value $ \phi _i (v) $ allocated to player $ i \in N $ is
based entirely on the marginal value
$ v \bigl( S \cup \{ i \} \bigr) - v (S) $ the player contributes when
joining each coalition $ S \subset N \setminus \{ i \} $. It is
uniquely defined according to the following theorem.

\begin{theorem}[\citet{Shapley1953}]
  \label{thm:shapley}
  There exists a unique allocation
  $ v \mapsto \bigl( \phi _i (v) \bigr) _{ i \in N } $ satisfying the
  following conditions:
  \begin{enumerate}[label=(\alph*)]
  \item efficiency: $ \sum _{ i \in N } \phi _i (v) = v (N) $.
  \item null-player property: If
    $ v \bigl( S \cup \{ i \} \bigr) - v (S) = 0 $ for all
    $ S \subset N \setminus \{ i \} $, then $ \phi _i (v) = 0 $.
  \item symmetry: If
    $ v \bigl( S \cup \{ i \} \bigr) = v \bigl( S \cup \{ j \} \bigr)
    $ for all $ S \subset N \setminus \{ i, j \} $, then
    $ \phi _i (v) = \phi _j (v) $.
  \item linearity: If $ v, v ^\prime $ are two games with the same set
    of players $N$, then
    $ \phi _i ( \alpha v + \alpha ^\prime v ^\prime ) = \alpha \phi _i
    (v) + \alpha ^\prime \phi _i ( v ^\prime ) $ for all
    $ \alpha, \alpha ^\prime \in \mathbb{R} $.
  \end{enumerate}
  Moreover, this allocation is given by the following explicit formula:
  \begin{equation}
    \label{eqn:shapley}
    \phi _i (v) = \sum _{ S \subset N \setminus \{ i \} } \frac{ \lvert S \rvert ! \bigl( \lvert N \rvert - 1 - \lvert S \rvert \bigr) ! }{ \lvert N \rvert ! } \Bigl( v \bigl(  S \cup \{ i \} \bigr) - v (S) \Bigr) .
  \end{equation}
\end{theorem}

The conditions (a)--(d) listed above are often called the
\emph{Shapley axioms}. Simply stated, they say that (a) the value
obtained by the grand coalition is fully distributed among the
players, (b) a player who contributes no marginal value to any
coalition receives nothing, (c) equivalent players receive equal
amounts, and (d) the allocation is linear in the game values.

The formula \eqref{eqn:shapley} has the following useful
interpretation.  Suppose the players form the grand coalition by
joining, one-at-a-time, in the order defined by a permutation $\sigma$
of $N$. That is, player $i$ joins immediately after the coalition
$ S _{ \sigma, i } = \bigl\{ j \in N : \sigma (j) < \sigma (i) \bigr\}
$ has formed, contributing marginal value
$ v \bigl( S _{ \sigma, i } \cup \{ i \} \bigr) - v (S _{ \sigma, i }
) $. Then $ \phi _i (v) $ is the average marginal value contributed by
player $i$ over all $ \lvert N \rvert ! $ permutations $\sigma$, i.e.,
\begin{equation}
  \label{eqn:shapleyPermutation}
  \phi _i (v) = \frac{ 1 }{ \lvert N \rvert ! } \sum _\sigma \Bigl( v \bigl( S _{ \sigma, i } \cup \{ i \} \bigr) - v (S _{ \sigma, i } ) \Bigr) .
\end{equation} 
The equivalence of \eqref{eqn:shapley} and
\eqref{eqn:shapleyPermutation} is due to the fact that
$ \lvert S \rvert ! \bigl( \lvert N \rvert - 1 - \lvert S \rvert
\bigr) ! $ is precisely the number of permutations $\sigma$ for which
$ S = S _{ \sigma, i } $, since there are $ \lvert S \rvert ! $ ways
to permute the preceding players and
$ \bigl( \lvert N \rvert - 1 - \lvert S \rvert \bigr) ! $ ways to
permute the succeeding players.

For purposes of computation, of course, \eqref{eqn:shapley} is
preferable to \eqref{eqn:shapleyPermutation}, since it contains
$ 2 ^{ \lvert N \rvert - 1 } $ terms rather than $ \lvert N \rvert ! $
terms. Computing the Shapley value is \#P-complete (\citet{DePa1994}),
although some recent work has explored polynomial algorithms for
obtaining approximations to the Shapley value
(\citet{CaGoTe2009,CaGoMoTe2017}).

\begin{example}
  \label{ex:introGlove}
  The ``glove game'' is a classic illustrative example of a
  cooperative game. Let $ N = \{ 1, 2, 3 \} $, and suppose that player
  $1$ has a left-hand glove, while players $2$ and $3$ each have a
  right-hand glove. The players wish to put together a pair of gloves,
  which can be sold for value $1$, while unpaired gloves have no
  value. That is, $ v (S) = 1 $ if $S \subset N $ contains both a left
  and a right glove (i.e., player $1$ and at least one of players $2$
  or $3$) and $ v (S) = 0 $ otherwise. The Shapley values for this
  game are
  \begin{equation*}
    \phi _1 (v) = \frac{ 2 }{ 3 } , \qquad \phi _2 (v) = \phi _3 (v) = \frac{ 1 }{ 6 } .
  \end{equation*}
  This is perhaps easiest to interpret from the
  ``average-over-permutations'' perspective: player $1$ contributes
  marginal value $0$ when joining the coalition first (2 of 6
  permutations) and marginal value $1$ otherwise (4 of 6
  permutations), so $ \phi _1 (v) = \frac{ 2 }{ 3 } $. Efficiency and
  symmetry immediately give
  $ \phi _2 (v) = \phi _3 (v) = \frac{ 1 }{ 6 } $.
\end{example}

\subsection{Combinatorial Hodge decomposition and the graph Laplacian}
\label{sec:introHodge}

This section briefly reviews a kind of ``discrete calculus'' on
graphs, which can be used to orthogonally decompose spaces of
functions on vertices and edges. This is a special case of the
\emph{combinatorial Hodge decomposition} on simplicial complexes
(\citet{Eckmann1945}, \citet{Dodziuk1974}), which is itself a discrete
version of the Hodge decomposition on manifolds
(\citet{Hodge1934,Hodge1935,Hodge1936,Hodge1941,Kodaira1949}). While
Hodge theory is a deep subject, connecting algebraic topology with
geometry and elliptic PDE theory, the case of graphs requires nothing
more than finite-dimensional linear algebra. Therefore, we take an
elementary approach here, and we refer readers interested in the more
general theory to the preceding references.

Let $ G = ( V, E ) $ be an oriented graph, where $V$ is the set of
vertices and $E \subset V \times V $ is the set of edges. By
``oriented,'' we mean that at most one of $ ( a, b ) $ and
$ ( b, a ) $ is in $E$ for $ a, b \in V $. If
$ f \colon E \rightarrow \mathbb{R} $ and $ (a,b) \in E $, we define
$ f(b,a) \coloneqq - f (a,b) $ for the reverse-oriented edge.  Denote
by $ \ell ^2 (V) $ the space of functions
$ V \rightarrow \mathbb{R} $, equipped with the $ \ell ^2 $ inner
product
\begin{equation*}
  \langle u , v \rangle \coloneqq \sum _{ a \in V } u (a) v (a) .
\end{equation*}
Similarly, denote by $ \ell ^2 (E) $ the space of functions
$E \rightarrow \mathbb{R} $ with inner product
\begin{equation*}
  \langle f, g \rangle \coloneqq \sum _{ (a,b) \in E } f (a,b) g (a,b) .
\end{equation*}
With respect to the standard bases defined by $V$ and $E$, we can
identify these spaces with $ \mathbb{R} ^{ \lvert V \rvert } $ and
$ \mathbb{R} ^{ \lvert E \rvert } $, each equipped with the Euclidean
dot product.

We next define a linear operator
$ \mathrm{d} \colon \ell ^2 (V) \rightarrow \ell ^2 (E) $ and its
adjoint
$ \mathrm{d} ^\ast \colon \ell ^2 (E) \rightarrow \ell ^2 (V) $, which
are discrete analogs of the gradient and (negative) divergence of
vector calculus (or, more generally, the differential and
codifferential of exterior calculus). Define $ \mathrm{d} $ by
\begin{equation*}
  \mathrm{d} u ( a, b ) \coloneqq u (b) - u (a) .
\end{equation*}
With respect to the standard bases, the matrix of $ \mathrm{d} $ is
just the transpose of the oriented incidence matrix of $G$. The
adjoint is defined by
$ \langle u , \mathrm{d} ^\ast f \rangle \coloneqq \langle \mathrm{d}
u , f \rangle $; explicitly, this is given by
\begin{equation*}
  (\mathrm{d} ^\ast f) (a) = \sum _{ b \sim a } f(b,a),
\end{equation*} 
where $ b \sim a $ denotes that $ (a,b) \in E $ or $ ( b, a ) \in E
$. Again, with respect to the standard bases, the matrix of
$ \mathrm{d} ^\ast $ is just the transpose of that for $ \mathrm{d} $,
i.e., the oriented incidence matrix.

By the fundamental theorem of linear algebra (\citet{Strang1993}), we
may orthogonally decompose $ \ell ^2 (V) $ and $ \ell ^2 (E) $ as
\begin{equation}
  \label{eqn:graphHodgeDecomp}
  \ell ^2 (V) = \mathcal{R} ( \mathrm{d} ^\ast ) \oplus \mathcal{N} ( \mathrm{d} ) , \qquad \ell ^2 (E) = \mathcal{R} ( \mathrm{d} ) \oplus \mathcal{N} ( \mathrm{d} ^\ast ) ,
\end{equation}
where $ \mathcal{R} ( \cdot ) $ and $ \mathcal{N} ( \cdot ) $ denote
range and kernel (nullspace). We call \eqref{eqn:graphHodgeDecomp} the
\emph{combinatorial Hodge decomposition} of $ \ell ^2 (V) $ and
$ \ell ^2 (E) $.

Finally, the \emph{graph Laplacian} is defined by
$ L \coloneqq \mathrm{d} ^\ast \mathrm{d} \colon \ell ^2 (V)
\rightarrow \ell ^2 (V) $. This is identical to the usual graph
Laplacian encountered in, e.g., spectral graph theory
(\citet{Chung1997}), usually expressed as $ L = {D} - A $, where
$ {D} $ is the \emph{degree matrix} and $A$ is the (unsigned)
\emph{adjacency matrix} of the graph $G$. These expressions for $L$
are seen to be identical by observing that
\begin{equation*} 
  (\mathrm{d} ^\ast \mathrm{d} u ) (a)
  = \sum _{ b \sim a } \mathrm{d} u ( b,a ) 
  = \sum _{ b \sim a } \bigl( u(a) - u(b) \bigr) 
  = \operatorname{deg}(a) u (a) - \sum _{ b \sim a } u (b).
\end{equation*} 
The graph Laplacian is intimately related to the Hodge decomposition
in several ways. Of particular interest to us, suppose we wish to
compute the decomposition of $ f \in \ell ^2 (E) $ as
$ \mathrm{d} u + p = f $, where $ u \in \ell ^2 (V) $ and
$ p \in \mathcal{N} ( \mathrm{d} ^\ast ) $. Taking
$ \mathrm{d} ^\ast $ of both sides of this equation gives
$ L u = \mathrm{d} ^\ast f $, so a solution yields the Hodge
decomposition of $f$. This may also be seen as the system of
\emph{normal equations} for the least-squares approximation of $f$ by
$ \mathrm{d} u \in \mathcal{R} (\mathrm{d})$.

\section{Decomposition of cooperative games}
\label{sec:decomp}

\subsection{Cooperative games and the hypercube graph}

Given the set of players $N$, define the oriented graph
$ G = ( V, E ) $ by
\begin{equation*}
  V = 2 ^N , \qquad E = \bigl\{ \bigl(  S, S \cup \{ i \} \bigr) \in V \times V : S \subset N \setminus \{ i \} ,\ i \in N \bigr\} .
\end{equation*}
This is precisely the $ \lvert N \rvert $-dimensional \emph{hypercube
  graph}, where each vertex corresponds to a coalition $ S \subset N$,
and where each edge corresponds to the addition of a single player
$i \notin S $ to $S$, oriented in the direction of the inclusion
$ S \hookrightarrow S \cup \{ i \} $.

With respect to this graph, a cooperative game is precisely a function
$ v \in \ell ^2 (V) $ such that $ v ( \emptyset ) = 0 $. Furthermore,
$ \mathrm{d} v \in \ell ^2 (E) $ gives the marginal value on each
edge, i.e., $ \mathrm{d} v \bigl( S, S \cup \{ i \} \bigr) $ is the
marginal value contributed by player $i$ joining coalition
$S \subset N \setminus \{ i \} $. In order to talk about the marginal
contributions of an individual player $ i \in N $, ignoring those of
the other players $ j \neq i $, we introduce the following collection
of ``partial differentiation'' operators.

\begin{definition}
  \label{def:di}
  For each $ i \in N $, let
  $ \mathrm{d} _i \colon \ell ^2 (V) \rightarrow \ell ^2 (E) $ be the
  operator
  \begin{equation*}
    \mathrm{d} _i u \bigl( S , S \cup \{ j \} \bigr) =
    \begin{cases}
      \mathrm{d} u \bigl( S , S \cup \{ i \} \bigr) & \text{if } i = j, \\
      0 & \text{if } i \neq j .
    \end{cases}
  \end{equation*} 
\end{definition}

Therefore, $ \mathrm{d} _i v \in \ell ^2 (E) $ encodes the marginal
value contributed by player $i$ to the game $v$. For any permutation
$\sigma$ of $N$, which defines a path from $ \emptyset $ to $N$, the
marginal value contributed by player $i$ along this path is
\begin{equation*}
  \sum _{ j \in N } \mathrm{d} _i v \bigl( S _{ \sigma , j } , S _{ \sigma, j } \cup \{ j \} \bigr) = \mathrm{d} v \bigl( S _{ \sigma , i } , S _{ \sigma, i } \cup \{ i \} \bigr) =  v \bigl( S _{ \sigma, i } \cup \{ i \} \bigr) - v (S _{ \sigma, i } ) ,
\end{equation*}
which can be interpreted as a discrete ``line integral'' of
$ \mathrm{d} _i v $ along the path.

\subsection{Decomposition of inessential games}

From \autoref{def:di}, we immediately see that
$ \mathrm{d} = \sum _{ i \in N } \mathrm{d} _i $. However, in general,
$ \mathcal{R} ( \mathrm{d} _i ) \not\subset \mathcal{R} ( \mathrm{d} )
$. To see this, observe that for any permutation $\sigma$,
\begin{equation*}
  \sum _{ j \in N } \mathrm{d} u \bigl( S _{ \sigma , j } , S _{ \sigma, j } \cup \{ j \} \bigr) = \sum _{ j \in N } \Bigl( u \bigl( S _{ \sigma, j } \cup \{ j \} \bigr) - u \bigl( S _{ \sigma , j } \bigr) \Bigr)= u (N) - u ( \emptyset ) ,
\end{equation*}
since the sum telescopes. This value is the same for every permutation
$\sigma$, which is a discrete analog of the fact that the line
integral of a conservative vector field depends only in the endpoints,
not the particular path chosen. Contrast this with the previous
expression: we have already seen that
$ v \bigl( S _{ \sigma, i } \cup \{ i \} \bigr) - v (S _{ \sigma, i }
) $ may be different, depending on the permutation $\sigma$, as in the
glove game of \autoref{ex:introGlove}. The question of when
$ \mathrm{d} _i v \in \mathcal{R} ( \mathrm{d} ) $ is related to the
notion of \emph{inessential games}, as we now show.

\begin{definition}
  The game $v$ is \emph{inessential} if
  $ v (S) = \sum _{ i \in S } v \bigl( \{ i \} \bigr) $ for all
  $ S \subset N $. That is, each coalition obtains the same value
  working together as its individual members would obtain working
  separately.
\end{definition}

\begin{proposition}
  \label{prop:inessential}
  The game $v$ is inessential if and only if
  $ \mathrm{d} _i v \in \mathcal{R} ( \mathrm{d} ) $ for all
  $ i \in N $.
\end{proposition}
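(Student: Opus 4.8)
The plan is to prove both implications by relating membership in $\mathcal{R}(\mathrm{d})$ directly to the constancy of player $i$'s marginal value. I would first record an equivalent, more convenient form of inessentiality: for a game $v$ with $v(\emptyset)=0$, inessentiality holds if and only if, for every $i \in N$, the marginal value $v(S \cup \{i\}) - v(S)$ equals $v(\{i\})$ for all $S \subset N \setminus \{i\}$. One direction is immediate from the definition, and the converse follows by telescoping $v(S)$ along any chain from $\emptyset$ to $S$. It then suffices to show that $\mathrm{d}_i v \in \mathcal{R}(\mathrm{d})$ for all $i$ is equivalent to this constancy condition.

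For the forward implication, suppose $v$ is inessential. I would exhibit an explicit potential: define $w_i \in \ell^2(V)$ by $w_i(S) = v(\{i\})$ when $i \in S$ and $w_i(S) = 0$ otherwise. A direct check of $\mathrm{d} w_i$ on each edge shows that $\mathrm{d} w_i\bigl(S, S\cup\{j\}\bigr) = 0$ when $j \neq i$, since $w_i$ depends only on whether $i \in S$, and equals $v(\{i\}) = v(S\cup\{i\}) - v(S)$ when $j = i$, using inessentiality. Hence $\mathrm{d}_i v = \mathrm{d} w_i \in \mathcal{R}(\mathrm{d})$.

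For the reverse implication, suppose $\mathrm{d}_i v = \mathrm{d} w$ for some $w \in \ell^2(V)$. The edges in directions $j \neq i$ carry $\mathrm{d}_i v = 0$, so $\mathrm{d} w$ vanishes on every such edge; since both $\{S : i \notin S\}$ and $\{S : i \in S\}$ are connected through edges in these directions, $w$ is constant on each, say equal to $a$ and $b$ respectively. The $i$-edges then give $v(S\cup\{i\}) - v(S) = w(S\cup\{i\}) - w(S) = b - a$ for every $S \subset N \setminus \{i\}$, so player $i$'s marginal value is constant; evaluating at $S = \emptyset$ identifies this constant as $v(\{i\})$. Carrying this out for every $i$ yields the constancy condition, hence inessentiality.

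The main obstacle, modest here, is the reverse direction, where the real work is converting the abstract hypothesis $\mathrm{d}_i v \in \mathcal{R}(\mathrm{d})$ into usable pointwise information. The key realization is that the vanishing of $\mathrm{d}_i v$ on all non-$i$ edges forces any potential $w$ to be locally constant on the two $i$-layers of the hypercube, which reduces the problem to reading off the single marginal value across the $i$-direction. Equivalently, one could argue that $\mathrm{d}_i v$ is a gradient if and only if its circulation vanishes around every generating $4$-cycle of the hypercube, which gives the same constancy of marginals; but the potential-based argument is more direct and avoids appealing to a description of the cycle space.
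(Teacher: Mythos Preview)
Your proof is correct and follows essentially the same approach as the paper. For the direction ``inessential $\Rightarrow \mathrm{d}_i v \in \mathcal{R}(\mathrm{d})$'' you construct the identical explicit potential (the paper calls it $v_i$); for the other direction the paper invokes the path-independence of line integrals of $\mathrm{d}u$ established just before the proposition, whereas you argue that the potential is constant on each of the two $i$-layers by connectedness --- but these are the same observation phrased two ways, and both conclude that the marginal value is constant and equal to $v(\{i\})$.
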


\begin{proof}
  If $ \mathrm{d} _i v \in \mathcal{R} ( \mathrm{d} ) $, then from the
  calculation above, it follows that the marginal value
  $ v \bigl( S \cup \{ i \} \bigr) - v ( S ) $ is the same for all
  coalitions $ S \subset N \setminus \{ i \} $. Taking
  $ S = \emptyset $, we see that this value is precisely
  $ v \bigl( \{ i \} \bigr) $. If this holds for all players
  $ i \in N $, then we conclude that $v$ is inessential.

  Conversely, suppose that $v$ is inessential, and define the game
  \begin{equation*}
    v _i (S) =
    \begin{cases}
      v \bigl( \{ i \} \bigr) & \text{if $ i \in S $,}\\
      0 & \text{if $ i \notin S $.}
    \end{cases}
  \end{equation*}
  It follows immediately that
  $ \mathrm{d} _i v = \mathrm{d} v _i \in \mathcal{R} ( \mathrm{d} )
  $, which completes the proof.
\end{proof}

Therefore, inessential games have the decomposition
$ v = \sum _{ i \in N } v _i $, where $ v _i $ is the game described
in the proof above. In the next section, we show how this
decomposition generalizes to arbitrary games, and how the Shapley
value naturally arises from the generalized decomposition.

\subsection{Decomposition of arbitrary games and the Shapley value}

For an arbitrary game $v$, we cannot hope to find games $ v _i $ such
that $ \mathrm{d} _i v = \mathrm{d} v _i $ (unless $v$ is inessential,
as shown in \autoref{prop:inessential}). However, the Hodge
decomposition \eqref{eqn:graphHodgeDecomp} ensures that we \emph{can}
write $ \mathrm{d} _i v \in \ell ^2 (E) $ as the sum of some
$ \mathrm{d} v _i \in \mathcal{R} ( \mathrm{d} ) $ and an element of
$ \mathcal{N} ( \mathrm{d} ^\ast ) $. Moreover, since $G$ is
connected, $ \mathcal{N} ( \mathrm{d} ) \cong \mathbb{R} $, so
$ v _i $ is uniquely determined by the condition
$ v _i ( \emptyset ) = 0 $, i.e., that $v _i $ is itself a game.

The main result of this section, \autoref{thm:gameDecomp}, proves that
these component games $ v _i $ satisfy certain efficiency,
null-player, symmetry, and linearity properties. This means, in
particular, that $ v _i (N) $ satisfies the Shapley axioms, so it must
coincide with the Shapley value $ \phi _i (v) $. In addition to this
axiomatic argument, we will also show directly, in
\autoref{rmk:shapleyFormula}, that $ v _i (N) $ agrees with the
Shapley formula \eqref{eqn:shapley}.

Let $ P \colon \ell ^2 (E) \rightarrow \mathcal{R} ( \mathrm{d} ) $
denote orthogonal projection onto $ \mathcal{R} ( \mathrm{d} ) $, so
that $ P \mathrm{d} _i v $ is the $ \mathcal{R} ( \mathrm{d} ) $
component in the Hodge decomposition of $ \mathrm{d} _i v $.  Note
that this projection is introduced as a theoretical tool; in practice,
we will never need to explicitly construct a matrix for $P$. Instead,
as in \autoref{sec:introHodge}, we will compute the Hodge
decomposition by solving a problem involving the graph Laplacian,
which will be discussed in \autoref{sec:laplacian}.

\begin{theorem}
  \label{thm:gameDecomp}
  For each $ i \in N $, let $ v _i \in \ell ^2 (V) $ with
  $ v _i ( \emptyset ) $ be the unique game such that
  $ \mathrm{d} v _i = P \mathrm{d} _i v $.  Then the games $ v _i $
  satisfy the following:
  \begin{enumerate}[label=(\alph*)]
  \item $ \displaystyle\sum _{ i \in N } v _i = v $.
  \item If $ v \bigl( S \cup \{ i \} \bigr) - v (S) = 0 $ for all
    $ S \subset N \setminus \{ i \} $, then $ v _i = 0 $.
  \item If $ \sigma $ is a permutation of $N$ and $ \sigma ^\ast v $
    is the game
    $ ( \sigma ^\ast v ) (S) = v \bigl( \sigma (S) \bigr) $, then
    $ ( \sigma ^\ast v ) _i = \sigma ^\ast ( v _{ \sigma (i) } ) $. In
    particular, if $ \sigma $ is the permutation swapping $i$ and $j$,
    and if $ \sigma ^\ast v = v $, then
    $ v _i = \sigma ^\ast (v _j) $.

  \item For any two games $ v , v ^\prime $ and
    $ \alpha, \alpha ^\prime \in \mathbb{R} $,
    $ ( \alpha v + \alpha ^\prime v ^\prime ) _i = \alpha v _i +
    \alpha ^\prime v ^\prime _i $.
  \end{enumerate}
  Consequently, $ v _i (N) = \phi _i (v) $ is the Shapley value for
  each player $i \in N $.
\end{theorem}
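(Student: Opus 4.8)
The plan is to exploit a single uniqueness principle throughout: since the hypercube graph $G$ is connected, $\mathcal{N}(\mathrm{d})$ consists of the constant functions, and the only constant function vanishing at $\emptyset$ is zero. Hence two games (functions vanishing at $\emptyset$) with the same image under $\mathrm{d}$ must coincide. Because each $v_i$ is defined by the identity $\mathrm{d} v_i = P \mathrm{d}_i v$, every one of the claims (a)--(d) reduces to verifying the corresponding identity at the level of $\mathrm{d}$-images, where it becomes a statement about the linear operators $\mathrm{d}$, $\mathrm{d}_i$, and $P$.

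For (a), I would sum the defining relations and use $\sum_i \mathrm{d}_i = \mathrm{d}$ together with linearity of $P$: $\sum_i \mathrm{d} v_i = P \sum_i \mathrm{d}_i v = P \mathrm{d} v = \mathrm{d} v$, the last equality because $\mathrm{d} v$ already lies in $\mathcal{R}(\mathrm{d})$. Then $\mathrm{d}\bigl(\sum_i v_i\bigr) = \mathrm{d} v$ with both sides games, so $\sum_i v_i = v$ by uniqueness. For (b), the null-player hypothesis says precisely that $\mathrm{d}_i v = 0$; then $\mathrm{d} v_i = P \cdot 0 = 0$ forces $v_i = 0$. Part (d) is immediate from linearity of the composite $P \mathrm{d}_i$, again invoking uniqueness to pass from the $\mathrm{d}$-image identity to the games. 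These three parts are routine once the uniqueness principle is in place.

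The symmetry statement (c) is where the real work lies, and I expect the bookkeeping around the permutation action to be the main obstacle. A permutation $\sigma$ of $N$ acts on $V = 2^N$ by $S \mapsto \sigma(S)$ and hence as a graph automorphism of $G$ carrying the edge $\bigl(S, S \cup \{i\}\bigr)$ to $\bigl(\sigma(S), \sigma(S) \cup \{\sigma(i)\}\bigr)$; I would let $\sigma^\ast$ denote the induced orthogonal operators on both $\ell^2(V)$ and $\ell^2(E)$. The proof then rests on three commutation relations: first, $\sigma^\ast \mathrm{d} = \mathrm{d}\, \sigma^\ast$, by direct expansion of both sides on an edge; second, $\sigma^\ast P = P \sigma^\ast$, which holds because $\sigma^\ast$ is orthogonal and maps $\mathcal{R}(\mathrm{d})$ into itself (by the first relation), so it preserves the orthogonal complement and hence commutes with the orthogonal projection onto $\mathcal{R}(\mathrm{d})$; and third, the index-shifting identity $\sigma^\ast \mathrm{d}_i = \mathrm{d}_{\sigma^{-1}(i)}\, \sigma^\ast$, the delicate one, which I would check by tracking how $\mathrm{d}_i$ selects the edge adding player $i$ before and after relabeling. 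Chaining these gives $\mathrm{d}\bigl[\sigma^\ast(v_{\sigma(i)})\bigr] = \sigma^\ast \mathrm{d} v_{\sigma(i)} = \sigma^\ast P \mathrm{d}_{\sigma(i)} v = P \mathrm{d}_i (\sigma^\ast v) = \mathrm{d}\bigl[(\sigma^\ast v)_i\bigr]$; since $\sigma^\ast(v_{\sigma(i)})$ is a game, uniqueness yields $(\sigma^\ast v)_i = \sigma^\ast(v_{\sigma(i)})$. For the swap $\sigma$ of $i$ and $j$, I would note that $\sigma^\ast v = v$ is exactly the Shapley symmetry hypothesis $v\bigl(S \cup \{i\}\bigr) = v\bigl(S \cup \{j\}\bigr)$ for $S \subset N \setminus \{i,j\}$, and then conclude $v_i = \sigma^\ast(v_j)$.

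Finally, to obtain $v_i(N) = \phi_i(v)$, I would set $\psi_i(v) \coloneqq v_i(N)$ and check that $v \mapsto \bigl(\psi_i(v)\bigr)_{i \in N}$ satisfies the four axioms of \autoref{thm:shapley}: efficiency by evaluating (a) at $N$; the null-player property from (b); symmetry from the swap case of (c), since $\sigma(N) = N$ gives $v_i(N) = \sigma^\ast(v_j)(N) = v_j(N)$; and linearity from (d) evaluated at $N$. By the uniqueness clause of Shapley's theorem, $\psi_i$ must equal $\phi_i$, which is the desired conclusion.
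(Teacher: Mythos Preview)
Your proposal is correct and follows essentially the same route as the paper: each part is reduced to an identity of $\mathrm{d}$-images using linearity of $P$ and $\mathrm{d}_i$, the commutation relations $\mathrm{d}\,\sigma^\ast = \sigma^\ast \mathrm{d}$, $P\,\sigma^\ast = \sigma^\ast P$, and the index-shifting identity (which the paper writes equivalently as $\mathrm{d}_i\,\sigma^\ast = \sigma^\ast \mathrm{d}_{\sigma(i)}$), after which connectedness of $G$ and the normalization $v_i(\emptyset)=0$ pin down the games, and the Shapley conclusion follows from the uniqueness in \autoref{thm:shapley}.
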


\begin{proof}
  First, since $ \mathrm{d} = \sum _{ i \in N } \mathrm{d} _i $, we
  have
  \begin{equation*}
    \mathrm{d} \sum _{ i \in N } v _i = \sum _{ i \in N } \mathrm{d} v _i = \sum _{ i \in N } P \mathrm{d} _i v = P \sum _{ i \in N } \mathrm{d} _i v = P \mathrm{d} v = \mathrm{d} v .
  \end{equation*}
  Since $G$ is connected, it follows that $ \sum _{ i \in N } v _i $
  and $v$ differ by a constant. But this constant is
  $ v ( \emptyset ) - \sum _{ i \in N } v _i (\emptyset) = 0 $, which
  proves (a).

  Next, if $ v \bigl( S \cup \{ i \} \bigr) - v (S) = 0 $ for all
  $ S \subset N \setminus \{ i \} $, then $ \mathrm{d} _i v = 0 $. It
  follows that $ \mathrm{d} v _i = P \mathrm{d} _i v = 0 $. Hence,
  $ v _i $ is constant, but since $ v _i ( \emptyset ) = 0 $, we
  conclude that $ v _i = 0 $, which proves (b).

  Next, if $\sigma$ is a permutation of $N$, then direct calculation
  shows that $ \mathrm{d} \sigma ^\ast = \sigma ^\ast \mathrm{d} $ and
  $ \mathrm{d} _i \sigma ^\ast = \sigma ^\ast \mathrm{d} _{ \sigma (i)
  } $. Furthermore, $\sigma$ preserves counting measure and hence the
  $ \ell ^2 $ inner product, so $ P \sigma ^\ast = \sigma ^\ast P
  $. Thus,
  \begin{equation*}
    \mathrm{d} ( \sigma ^\ast v ) _i = P \mathrm{\mathrm{d}} _i (\sigma ^\ast v ) = P \sigma ^\ast ( \mathrm{d} _{ \sigma (i) } v ) = \sigma ^\ast ( P \mathrm{d} _{ \sigma (i) } v ) = \sigma ^\ast ( \mathrm{d} v _{ \sigma (i) } ) = \mathrm{d} \sigma ^\ast ( v _{ \sigma (i) } ) ,
  \end{equation*}
  so $ ( \sigma ^\ast v ) _i $ and
  $ \sigma ^\ast ( v _{ \sigma (i) } ) $ differ by a constant. But
  this constant is
  $ ( \sigma ^\ast v ) _i ( \emptyset ) - \sigma ^\ast ( v _{ \sigma
    (i) } ) ( \emptyset ) = v _i ( \emptyset ) - v _{ \sigma (i) } (
  \emptyset ) = 0 $, which proves (c).

  Next, since $ \mathrm{d} $, $ \mathrm{d} _i $, and $P$ are all linear maps,
  \begin{multline*}
    \mathrm{d} ( \alpha v + \alpha ^\prime v ^\prime ) _i = P \mathrm{d} _i ( \alpha v + \alpha ^\prime v ^\prime ) = \alpha P \mathrm{d} _i v + \alpha ^\prime P \mathrm{d} _i v ^\prime \\
    = \alpha \mathrm{d} v _i + \alpha ^\prime \mathrm{d} v _i ^\prime
    = \mathrm{d} ( \alpha v _i + \alpha ^\prime v _i ^\prime ).
  \end{multline*}
  Hence, the games $ ( \alpha v + \alpha ^\prime v ^\prime ) _i $ and
  $ ( \alpha v _i + \alpha ^\prime v _i ^\prime ) $ differ by a
  constant---but just as above, this constant must be zero, which
  proves (d).

  Finally, having shown (a)--(d), it follows that the allocation
  $ v \mapsto \bigl( v _i (N) \bigr) _{ i \in N } $ satisfies the
  Shapley axioms of \autoref{thm:shapley}. Indeed, condition (a)
  implies efficiency, (b) implies the null-player property, (c)
  implies symmetry, and (d) implies linearity. Hence, by the
  uniqueness property of the Shapley value, we must have that
  $ v _i (N) = \phi _i (v) $ for all $ i \in N $, which completes the
  proof.
\end{proof}

\begin{remark}
  An immediate corollary of \autoref{prop:inessential} and
  \autoref{thm:gameDecomp} is the standard result that
  $ \phi _i (v) = v \bigl( \{ i \} \bigr) $ for all $ i \in N $
  whenever $v$ is inessential.
\end{remark}

\begin{remark}
  \label{rmk:leastSquares}
  Since $P$ is orthogonal projection onto
  $ \mathcal{R} ( \mathrm{d} ) $, we may also view the game $ v _i $
  as the \emph{least-squares solution} to
  $ \mathrm{d}v _i = \mathrm{d} _i v $, which only has an exact
  solution when $v$ is inessential. That is, we have
  \begin{equation*}
    v _i = \operatorname{argmin}\displaylimits_{\substack{u \in \ell ^2 (V) \\ u ( \emptyset ) = 0 }} \lVert \mathrm{d} u - \mathrm{d} _i v \rVert _{ \ell ^2 (E) } .
  \end{equation*}
  This is similar in spirit to the work of \citet{KuSa2007} on
  minimum-norm solution concepts, including the least-square values of
  \citet{RuVaZa1998}. Specifically, \citet{KuSa2007} consider the
  projection of $v$ itself onto the subspace of inessential games in
  $ \ell ^2 (V) $. By contrast, the projection in our approach is
  performed on $ \ell ^2 (E) $.
\end{remark}

\begin{remark}
  \label{rmk:shapleyFormula}
  The decomposition of \autoref{thm:gameDecomp} also gives a
  straightforward way to derive the Shapley formula
  \eqref{eqn:shapley}. By linearity, this formula is equivalent to the
  statement that, if
  $ \chi _{ ( S, S \cup \{ i \} ) } \in \ell ^2 (E) $ is the indicator
  function equal to $1$ on $ \bigl( S, S \cup \{ i \} \bigr) $ and $0$
  on all other edges, and if $ u \in \ell ^2 (V) $ is the solution to
  $ \mathrm{d} u = P \chi _{ ( S, S \cup \{ i \}) } $ with
  $ u (\emptyset) = 0 $, then
  $ u (N) = \frac{ \lvert S \rvert ! ( \lvert N \rvert - 1 - \lvert S
    \rvert ) ! }{ \lvert N \rvert ! } $.

  To prove this, we begin by observing that $ u (N) $ depends only on
  $ \lvert S \rvert $, not on the particular choice of $S$ and
  $i$. Indeed, let $ \lvert T \rvert = \lvert S \rvert $ with
  $ j \notin T $, and let $\sigma$ be a permutation of $N$ such that
  $ \sigma (S) = T $ and $ \sigma (i) = j $. Then
  \begin{equation*}
    \mathrm{d} ( \sigma ^\ast u ) = \sigma ^\ast ( \mathrm{d} u ) = \sigma ^\ast ( P \chi _{ ( S, S \cup \{ i \} ) } ) = P \sigma ^\ast \chi _{ ( S, S \cup \{ i \} ) } = P \chi _{ ( T , T \cup \{ j \} ) },
  \end{equation*}
  and $ ( \sigma ^\ast u ) (N) = u \bigl( \sigma (N) \bigr) = u (N)
  $. Hence, we get the same value for $ u (N) $ whether we choose the
  edge $ \bigl( S, S \cup \{ i \} \bigr) $ or
  $ \bigl( T , T \cup \{ j \} \bigr) $.

  Next, consider the game
  \begin{equation*}
    v (T) \coloneqq
    \begin{cases}
      1 & \text{if } \lvert T \rvert > \lvert S \rvert, \\
      0 & \text{if } \lvert T \rvert \leq \lvert S \rvert,
    \end{cases}
  \end{equation*}
  so that
  \begin{equation*}
    \mathrm{d} v = \sum _{ \substack{\lvert T \rvert = \lvert S \rvert \\ j \notin T } } \chi  _{ ( T, T \cup \{ j \} ) } .
  \end{equation*}
  This sum contains
  $ \binom{\lvert N \rvert }{\lvert S \rvert } \bigl( \lvert N \rvert
  - \lvert S \rvert \bigr) = \frac{ \lvert N \rvert ! }{ \lvert S
    \rvert ! ( \lvert N \rvert - 1 - \lvert S \rvert ) ! } $ terms,
  and the symmetry argument above shows that they all contribute
  equally to $ v (N) $, so
  \begin{equation*}
    v (N) = \frac{ \lvert N \rvert ! }{ \lvert S
      \rvert ! ( \lvert N \rvert - 1 - \lvert S \rvert ) ! } u (N) .
  \end{equation*}
  Finally, since $ v (N) = 1 $, we obtain the claimed expression for
  $ u (N) $.
\end{remark}

\begin{example}
  \label{ex:decompGlove}
  We now illustrate the decomposition of \autoref{thm:gameDecomp} by
  applying it to the glove game introduced in
  \autoref{ex:introGlove}. Since $ \lvert N \rvert = 3 $, the graph
  $G$ consists of vertices and edges of the ordinary,
  three-dimensional cube.

  \autoref{tab:glove} contains the values of $v$ and the component
  games $ v _1 $, $ v _2 $, and $ v _3 $. Several of the properties
  proved in \autoref{thm:gameDecomp} are immediately apparent. In
  particular, we have the decomposition $ v = v _1 + v _2 + v _3 $,
  while $ v _1 (N) = \frac{ 2 }{ 3 } $ and
  $ v _2 (N) = v _3 (N) = \frac{ 1 }{ 6 } $ are the Shapley values
  previously obtained in \autoref{ex:introGlove}. Furthermore, the
  symmetry of players $2$ and $3$ is evident in all three component
  games, not just $ v _2 $ and $ v _3 $. Indeed, if $ \sigma $ is the
  permutation swapping players $2$ and $3$, then
  $ \sigma ^\ast v = v $, so we have
  \begin{equation*}
    v _1 = \sigma ^\ast v _1 , \qquad v _2 = \sigma ^\ast v _3 , \qquad v _3 = \sigma ^\ast v _2 ,
  \end{equation*}
  which can be observed in \autoref{tab:glove}. We also point out
  that, although $ v \geq 0 $, the component games
  $ v _1 , v _2 , v _3 $ may take negative values.
  Note also that $ \mathrm{d} v _i \neq \mathrm{d} _i v $,
  corresponding to the fact that the glove game is not inessential.

  \begin{table}
    \centering
    \renewcommand\arraystretch{1.4}
  \begin{tabular}{crrrr}
    $S$ & $v $ & $ v _1 $ & $ v _2 $ & $ v _3 $ \\
    \hline
    $ \emptyset $ & $0$ & $0$ & $0$ & $0 $ \\
    $ \{ 1 \} $ & $0$ & $ \frac{5}{12} $ & $ - \frac{ 5 }{ 24 } $ & $ - \frac{ 5 }{ 24 } $ \\
    $ \{ 2 \} $ & $0$ & $ - \frac{5}{24} $ & $ \frac{ 1 }{ 6 } $ & $ \frac{ 1 }{ 24 } $ \\
    $ \{ 3 \} $ & $0$ & $ - \frac{5}{24} $ & $ \frac{ 1 }{ 24 } $ & $ \frac{ 1 }{ 6 } $ \\
    $ \{ 1, 2 \} $ & $1$ & $ \frac{ 5 }{ 8 } $ & $ \frac{ 3 }{ 8 } $ & $0$ \\
    $ \{ 1, 3 \} $ & $1$ & $ \frac{ 5 }{ 8 } $ & $0$ & $ \frac{ 3 }{ 8 } $ \\
    $ \{ 2, 3 \} $ & $0$ & $ - \frac{ 1 }{ 4 } $ & $ \frac{ 1 }{ 8 } $ & $ \frac{ 1 }{ 8 } $ \\
    $ \{ 1,2,3 \} $ & $1$ & $ \boldsymbol{ \frac{ 2 }{ 3 } } $ & $ \boldsymbol{ \frac{ 1 }{ 6 } } $ & $ \boldsymbol{ \frac{ 1 }{ 6 } } $
  \end{tabular}
  \bigskip 
  \caption{Decomposition of the three-player glove game as
    $ v = v _1 + v _2 + v _3 $, following
    \autoref{thm:gameDecomp}. The Shapley values of
    $ \frac{ 2 }{ 3 } $, $ \frac{ 1 }{ 6 } $, $ \frac{ 1 }{ 6 } $
    appear in bold on the last line, corresponding to the value of the
    grand coalition in each component game.\label{tab:glove}}
\end{table}
\end{example}

\subsection{Decomposition via the hypercube graph Laplacian}
\label{sec:laplacian}

We now briefly show how the component games $ v _i $ may be computed
using the graph Laplacian $ L = \mathrm{d} ^\ast \mathrm{d} $, without
having to explicitly compute the orthogonal projection operator
$P$. Denote $ L _i \coloneqq \mathrm{d} ^\ast \mathrm{d} _i $; this is
in fact a weighted graph Laplacian, where the edge
$ \bigl( S , S \cup \{ j \} \bigr) $ has weight $1$ if $ i = j $ and
$0$ otherwise. (We will say more about weighted graph Laplacians in
\autoref{sec:weightedDecomp}.)

\begin{proposition}
  \label{prop:laplace}
  For each $ i \in N $, the component game $ v _i $ of
  \autoref{thm:gameDecomp} is the unique solution to
  $ L v _i = L _i v $ such that $ v _i ( \emptyset ) = 0 $.
\end{proposition}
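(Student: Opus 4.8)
The plan is to derive the equation $L v_i = L_i v$ directly from the defining relation $\mathrm{d} v_i = P \mathrm{d}_i v$ of \autoref{thm:gameDecomp}, and then to establish uniqueness using the connectedness of $G$. For the existence half, I would start from the Hodge decomposition of $\mathrm{d}_i v$. Since $P$ is the orthogonal projection onto $\mathcal{R}(\mathrm{d})$, and since $\mathcal{R}(\mathrm{d})^\perp = \mathcal{N}(\mathrm{d}^\ast)$ by \eqref{eqn:graphHodgeDecomp}, I can write $\mathrm{d}_i v = \mathrm{d} v_i + p$ with $p = \mathrm{d}_i v - P \mathrm{d}_i v \in \mathcal{N}(\mathrm{d}^\ast)$. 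Applying $\mathrm{d}^\ast$ to both sides and using $\mathrm{d}^\ast p = 0$ yields
\[
  L v_i = \mathrm{d}^\ast \mathrm{d} v_i = \mathrm{d}^\ast \mathrm{d}_i v = L_i v,
\]
so $v_i$ solves the stated equation. This is precisely the passage from the Hodge decomposition to the normal equations described in \autoref{sec:introHodge}, here applied with $f = \mathrm{d}_i v$.

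For uniqueness, I would verify that $\mathcal{N}(L) = \mathcal{N}(\mathrm{d})$. The inclusion $\mathcal{N}(\mathrm{d}) \subset \mathcal{N}(L)$ is immediate from $L = \mathrm{d}^\ast \mathrm{d}$, and for the reverse, if $L u = 0$ then
\[
  0 = \langle L u, u \rangle = \langle \mathrm{d}^\ast \mathrm{d} u, u \rangle = \langle \mathrm{d} u, \mathrm{d} u \rangle = \lVert \mathrm{d} u \rVert^2_{\ell^2(E)},
\]
forcing $\mathrm{d} u = 0$. Since $G$ is connected, $\mathcal{N}(\mathrm{d}) \cong \mathbb{R}$ consists of the functions that are constant on $V$. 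Hence any two solutions of $L u = L_i v$ differ by a constant, and imposing the normalization $u(\emptyset) = 0$ selects exactly one of them, namely $v_i$.

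I do not expect a genuine obstacle, since every ingredient is already available: the orthogonal splitting $\ell^2(E) = \mathcal{R}(\mathrm{d}) \oplus \mathcal{N}(\mathrm{d}^\ast)$, the connectedness of the hypercube graph, and the normalization $v_i(\emptyset) = 0$. The one step deserving care is the identity $\mathcal{N}(L) = \mathcal{N}(\mathrm{d})$, which is what collapses the indeterminacy in $L u = L_i v$ down to a single additive constant; it is the standard fact that the nonnegative self-adjoint operator $\mathrm{d}^\ast \mathrm{d}$ shares its kernel with $\mathrm{d}$. Everything else follows mechanically from $\mathrm{d} v_i = P \mathrm{d}_i v$ and the definitions $L = \mathrm{d}^\ast \mathrm{d}$, $L_i = \mathrm{d}^\ast \mathrm{d}_i$.
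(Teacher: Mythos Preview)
Your proof is correct and follows essentially the same route as the paper: both derive $L v_i = L_i v$ by observing that $\mathrm{d} v_i - \mathrm{d}_i v \in \mathcal{N}(\mathrm{d}^\ast)$ and applying $\mathrm{d}^\ast$, and both obtain uniqueness from connectedness of $G$ together with the normalization $v_i(\emptyset) = 0$. The only difference is that you spell out the identity $\mathcal{N}(L) = \mathcal{N}(\mathrm{d})$ explicitly, whereas the paper simply invokes connectedness to conclude that $v_i - v_i'$ is constant.
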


\begin{proof}
  Since
  $ ( \mathrm{d} v _i - \mathrm{d} _i v ) \in \mathcal{N} ( \mathrm{d}
  ^\ast ) $, we immediately have
  \begin{equation*}
    0 = \mathrm{d} ^\ast ( \mathrm{d} v _i - \mathrm{d} _i v ) = \mathrm{d} ^\ast \mathrm{d} v _i - \mathrm{d} ^\ast \mathrm{d} _i v = L v _i - L _i v ,
  \end{equation*}
  so $ L v _i = L _i v $ as claimed. To show uniqueness, suppose that
  $ v _i ^\prime $ is another solution. Then
  $ L ( v _i - v _i ^\prime ) = 0 $, and since the hypercube graph is
  connected, we must have $ v _i - v _i ^\prime $ constant. But
  $ v _i ( \emptyset ) = v _i ^\prime ( \emptyset ) $, so it follows
  that $ v _i = v _i ^\prime $.
\end{proof}

\begin{remark}
  Equivalently, recall from \autoref{rmk:leastSquares} that $ v _i $
  may also be seen as the least-squares solution to
  $ \mathrm{d} v _i = \mathrm{d} _i v $. From this point of view,
  $ \mathrm{d} ^\ast \mathrm{d} v _i = \mathrm{d} ^\ast \mathrm{d} _i
  v $ is precisely the system of \emph{normal equations} corresponding
  to this least-squares problem, as previously discussed in
  \autoref{sec:introHodge}.
\end{remark}

\subsection{Explicit decomposition via discrete Green's functions}

In \autoref{prop:laplace}, we characterized the component game
$ v _i $ implicitly, as the solution to $ L v _i = L _i v $ with
$ v _i ( \emptyset ) = 0 $. In fact, it is possible to obtain an
\emph{explicit} formula for $ v _i $, in terms of the \emph{discrete
  Green's function} for the hypercube graph Laplacian (\citet[Example
4]{ChYa2000}). For an arbitrary game $ v $ and coalition
$ S \subset N $, the formula for $ v _i (S) $ is a bit unwieldy, but
we also present two situations in which it simplifies nicely. First,
when $ S = N $, we show that the formula for $ v _i (N) $ yields the
Shapley formula \eqref{eqn:shapley}, giving an alternative proof that
$ v _i (N) = \phi _i (v) $. Second, when $v$ is a \emph{pure
  bargaining game}, where $ v (S) = 0 $ for $ S \neq N $, we give an
explicit formula for $ v _i (S) $.

Let $ y \in \mathcal{R} ( \mathrm{d} ^\ast ) $, and consider the
problem $ L u = y $.  Rather than imposing the condition
$ u ( \emptyset ) = 0 $, we instead seek the unique solution
$u \in \mathcal{R} ( \mathrm{d} ^\ast ) $.\footnote{Since
  $ \mathcal{R} ( \mathrm{d} ^\ast ) $ is the orthogonal complement of
  $ \mathcal{N} ( \mathrm{d} ) \cong \mathbb{R} $, this is equivalent
  to saying that $y$ and $u$ are orthogonal to constant functions,
  i.e., that
  $ \sum _{ S \subset N } y (S) = \sum _{ S \subset N } u (S) = 0 $.}
The solution vanishing on $ \emptyset $ is then simply
$ u - u ( \emptyset ) $.  Since the restriction of $L$ to
$ \mathcal{R} ( \mathrm{d} ^\ast ) $ is symmetric positive definite,
it has a symmetric positive definite inverse $K$, and the solution is
$ u = K y $. Writing out the matrix multiplication explicitly, we have
\begin{equation*}
  u (S) = \sum _{ T \subset N } K ( S, T ) y (T) ,
\end{equation*}
where $ K ( \cdot , \cdot ) $ is called the \emph{discrete Green's
  function}. \citet[Example 4]{ChYa2000} showed that, for the
hypercube graph Laplacian with $ \lvert N \rvert = n $, this is given
by the formula
\begin{multline*}
  K ( S, T ) = \frac{ 1 }{ n 2 ^{ 2 n } } \biggl( - \sum _{ j < k }
  \frac{ \bigl[ \binom{n}{0} + \cdots + \binom{n}{j}\bigr] \bigl[
    \binom{n}{j+1} + \cdots + \binom{n}{n} \bigr] }{ \binom{n-1}{j} }\\
  + \sum _{ k \leq j } \frac{ \bigl[ \binom{n}{j+1} + \cdots +
    \binom{n}{n} \bigr] ^2 }{ \binom{n-1}{j} } \biggr),
\end{multline*}
where $k = \lvert S \mathbin{\triangle} T \rvert $ is the distance
between $S$ and $T$ in the graph; here,
$ S \mathbin{\triangle} T \coloneqq ( S \cup T ) \setminus ( S \cap T
) $ is the symmetric difference of $S$ and $T$. (We remark that this
differs from the formula in \citet{ChYa2000} by a factor of
$\frac{1}{n}$, since they consider the normalized graph Laplacian
$ \frac{1}{n} L $.)

Given a game $v$, we may therefore use this approach to calculate
$ u _i = K L _i v $ explicitly and then take
$ v _i = u _i - u _i ( \emptyset ) $. Fortunately, the fact that
$ y = L _i v $ simplifies the formulas substantially. Given
$ T \subset N \setminus \{ i \} $,
\begin{equation*}
  ( L _i v ) \bigl( T \cup \{ i \} \bigr) = v \bigl( T \cup \{ i \} \bigr) - v (T) = - ( L _i v ) (T) ,
\end{equation*}
and therefore,
\begin{equation*}
  u _i (S) = \sum _{ T \subset N \setminus \{ i \} } \Bigl( K \bigl( S, T \cup \{ i \} \bigr)  - K ( S, T ) \Bigr) \Bigl( v \bigl( T \cup \{ i \} \bigr) - v (T) \Bigr) 
\end{equation*} 
Furthermore, if $ S \subset N \setminus \{ i \} $ is distance $k$ from
$T$, then it is distance $ k + 1 $ from $ T \cup \{ i \} $. Therefore,
all but the $ j = k $ terms in
$ K \bigl( S, T \cup \{ i \} \bigr) - K ( S, T ) $ cancel, leaving
\begin{multline*}
  K \bigl( S , T \cup \{ i \} \bigr) - K ( S, T )\\
  \begin{aligned}
  &= - \frac{ 1 }{ n 2 ^{ 2 n } } \frac{ \bigl[ \binom{n}{0} + \cdots + \binom{n}{k}\bigr] \bigl[
    \binom{n}{k+1} + \cdots + \binom{n}{n} \bigr] + \bigl[ \binom{n}{k+1} + \cdots + \binom{n}{n} \bigr] ^2 }{ \binom{n-1}{k} }\\
  &= - \frac{ 1 }{ n 2 ^{ 2 n } } \frac{ \bigl[ \binom{n}{0} + \cdots + \binom{n}{k} + \binom{n}{k+1} + \cdots + \binom{n}{n} \bigr] \bigl[ \binom{n}{k+1} + \cdots + \binom{n}{n} \bigr] }{ \binom{n-1}{k}  } \\
  &= - \frac{ 1 }{ n 2 ^n} \frac{ \binom{n}{k+1} + \cdots + \binom{n}{n} }{ \binom{n-1}{k}  } .
\end{aligned}
\end{multline*}
Similarly, $ S \cup \{ i \} $ is distance $k$ from $ T \cup \{ i \} $
and distance $ k + 1 $ from $ T $, so
\begin{equation*}
  K \bigl( S \cup \{ i \} , T \cup \{ i \} \bigr) - K \bigl(  S \cup \{ i \} , T \bigr) = \frac{ 1 }{ n 2 ^n} \frac{ \binom{n}{k+1} + \cdots + \binom{n}{n} }{ \binom{n-1}{k}  }.
\end{equation*}
We have just proved the following theorem.

\begin{theorem}
  \label{thm:explicit}
  For each $ i \in N $, the component game $ v _i $ of
  \autoref{thm:gameDecomp} is given by
  $ v _i = u _i - u _i ( \emptyset ) $, where, for
  $ S \subset N \setminus \{ i \} $,
  \begin{equation*} 
    u _i \bigl( S \cup \{ i \} \bigr)  = \frac{ 1 }{ \lvert N \rvert  2 ^{\lvert N \rvert }} \sum _{ T \subset N \setminus \{ i \} } \frac{ \binom{\lvert N \rvert}{\lvert S \mathbin{\triangle} T \rvert +1} + \cdots + \binom{\lvert N \rvert }{\lvert N \rvert } }{ \binom{\lvert N \rvert -1}{\lvert S \mathbin{\triangle} T \rvert }  } \Bigl( v \bigl( T \cup \{ i \} \bigr) - v (T) \Bigr),
  \end{equation*}
  and $ u _i (S) = - u _i \bigl( S \cup \{ i \} \bigr) $. In particular,
  \begin{equation*}
    u _i ( \emptyset ) = - \frac{ 1 }{ \lvert N \rvert  2 ^{\lvert N \rvert } } \sum _{ T \subset N \setminus \{ i \} } \frac{ \binom{\lvert N \rvert }{\lvert T \rvert +1} + \cdots + \binom{\lvert N \rvert }{ \lvert N \rvert } }{ \binom{\lvert N \rvert -1}{ \lvert T \rvert }  } \Bigl( v \bigl( T \cup \{ i \} \bigr) - v (T) \Bigr).
  \end{equation*}
\end{theorem}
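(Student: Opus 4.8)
The plan is to assemble the theorem directly from the Green's-function identities already derived in the exposition above, so the bulk of the argument is organizational rather than computational. First I would record the starting point supplied by \autoref{prop:laplace}: the component game is characterized by $ L v _i = L _i v $ with $ v _i ( \emptyset ) = 0 $. Since $ L _i v = \mathrm{d} ^\ast \mathrm{d} _i v \in \mathcal{R} ( \mathrm{d} ^\ast ) $ (equivalently, $ L _i v $ sums to zero over all vertices, because its values at $T$ and $ T \cup \{ i \} $ are negatives of one another), the Green's-function inverse $K$ applies, and $ u _i = K L _i v \in \mathcal{R} ( \mathrm{d} ^\ast ) $ is the solution of $ L u = L _i v $ in that subspace. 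The game vanishing at $ \emptyset $ is then $ v _i = u _i - u _i ( \emptyset ) $, exactly as in the statement.

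Next I would reduce the double sum $ u _i (S) = \sum _{ T \subset N } K ( S, T ) ( L _i v ) (T) $ to a sum over $ T \subset N \setminus \{ i \} $. Using the antisymmetry $ ( L _i v ) (T) = - ( L _i v ) \bigl( T \cup \{ i \} \bigr) = - \bigl( v ( T \cup \{ i \} ) - v (T) \bigr) $, each pair $ \{ T, T \cup \{ i \} \} $ combines into a single coefficient $ K \bigl( S, T \cup \{ i \} \bigr) - K ( S, T ) $ multiplying the marginal value $ v ( T \cup \{ i \} ) - v (T) $. This is precisely the expression already displayed before the theorem, and it reduces the problem to evaluating this one difference of Green's-function values.

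The heart of the matter is the difference $ K \bigl( S, T \cup \{ i \} \bigr) - K ( S, T ) $, which I would handle by exploiting that the Chung--Yau formula depends on $S$ and $T$ only through the graph distance $ k = \lvert S \mathbin{\triangle} T \rvert $; write $ K ( S, T ) = g (k) $. For $ S, T \subset N \setminus \{ i \} $ one has $ \lvert S \mathbin{\triangle} ( T \cup \{ i \} ) \rvert = k + 1 $, so the coefficient is $ g ( k + 1 ) - g (k) $. In passing from $ g (k) $ to $ g ( k + 1 ) $ the only index that moves is $ j = k $, which leaves the squared sum and enters the product sum; every other term is unchanged and cancels in the difference. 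Collecting the two surviving $ j = k $ contributions and using $ \bigl[ \binom{n}{0} + \cdots + \binom{n}{k} \bigr] + \bigl[ \binom{n}{k+1} + \cdots + \binom{n}{n} \bigr] = 2 ^n $ collapses the prefactor $ 2 ^{ 2 n } $ to $ 2 ^n $ and yields the compact fraction recorded above. I expect this single-term telescoping, together with correctly tracking how adjoining $i$ raises or lowers the distance, to be the only genuinely delicate step.

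Finally I would read off the two cases. Taking the first argument to be $ S \cup \{ i \} $ with $ S \subset N \setminus \{ i \} $ gives distance $ k = \lvert S \mathbin{\triangle} T \rvert $ between $ S \cup \{ i \} $ and $ T \cup \{ i \} $ and distance $ k + 1 $ to $T$, producing the positive coefficient and hence the stated formula for $ u _i \bigl( S \cup \{ i \} \bigr) $; taking the first argument to be $S$ gives the same fraction with opposite sign, so $ u _i (S) = - u _i \bigl( S \cup \{ i \} \bigr) $. Specializing to $ S = \emptyset $, where $ \lvert \emptyset \mathbin{\triangle} T \rvert = \lvert T \rvert $, yields the displayed formula for $ u _i ( \emptyset ) $, and $ v _i = u _i - u _i ( \emptyset ) $ completes the proof.
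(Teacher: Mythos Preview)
Your proposal is correct and follows essentially the same approach as the paper: the paper places the entire computation (the antisymmetry of $L_i v$, the pairing of terms into differences $K(S, T\cup\{i\}) - K(S,T)$, the single-index telescoping at $j=k$, and the two distance cases) in the exposition immediately preceding the theorem and then simply declares ``We have just proved the following theorem.'' Your organization of these same steps as a proof plan matches the paper's argument point for point.
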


\begin{remark}
  This immediately gives us an explicit formula for the projection
  $ P \mathrm{d} _i v = \mathrm{d} v _i = \mathrm{d} u _i \in
  \mathcal{R} ( \mathrm{d} ) $.
\end{remark}

We now use this formula to get a new proof that
$ v _i (N) = \phi _i (v) $ is the Shapley value. Observe that the
distance from $ S = N \setminus \{ i \} $ to
$ T \subset N \setminus \{ i \} $ is just
$ \lvert N \rvert - \lvert T \rvert - 1 $, so \autoref{thm:explicit} says
that
\begin{align*}
  u _i (N)
  &= \frac{ 1 }{ \lvert N \rvert  2 ^{\lvert N \rvert } } \sum _{ T \subset N \setminus \{ i \} } \frac{ \binom{\lvert N \rvert }{\lvert N \rvert - \lvert T \rvert } + \cdots + \binom{\lvert N \rvert }{ \lvert N \rvert } }{ \binom{\lvert N \rvert -1}{\lvert N \rvert - \lvert T \rvert - 1 }  } \Bigl( v \bigl( T \cup \{ i \} \bigr) - v (T) \Bigr) \\
  &= \frac{ 1 }{ \lvert N \rvert  2 ^{\lvert N \rvert } } \sum _{ T \subset N \setminus \{ i \} } \frac{ \binom{\lvert N \rvert }{ 0 }+ \cdots + \binom{\lvert N \rvert }{\lvert T \rvert }  }{ \binom{\lvert N \rvert -1}{\lvert T \rvert }  } \Bigl( v \bigl( T \cup \{ i \} \bigr) - v (T) \Bigr).
\end{align*}
Subtracting $ v _i (N) = u _i (N) - u _i ( \emptyset ) $ and using
$ \binom{\lvert N \rvert }{0} + \cdots + \binom{\lvert N \rvert }{
  \lvert N \rvert } = 2 ^{ \lvert N \rvert } $ gives
\begin{align*}
  v _i (N) &= \frac{ 1 }{ \lvert N \rvert } \sum _{ T \subset N \setminus \{ i \} } \frac{ 1 }{ \binom{\lvert N \rvert -1}{\lvert T \rvert }  } \Bigl( v \bigl( T \cup \{ i \} \bigr) - v (T) \Bigr) \\
           &= \sum _{ T \subset N \setminus \{ i \} } \frac{ \lvert T \rvert ! \bigl( \lvert N \rvert - 1 - \lvert T \rvert \bigr) ! }{ \lvert N \rvert ! } \Bigl( v \bigl( T \cup \{ i \} \bigr) - v (T) \Bigr),
\end{align*}
which is precisely the Shapley formula \eqref{eqn:shapley} for
$ \phi _i (v) $.

Finally, we apply \autoref{thm:explicit} to the case where $v$ is a
pure bargaining game, i.e., where $ v (S) = 0 $ for $ S \neq N $.

\begin{theorem}
  Let $v$ be a pure bargaining game, $ i \in N $, and
  $ S \subset N \setminus \{ i \} $. Then the component game $ v _i $
  is given by
  \begin{align*}
    v _i \bigl( S \cup \{ i \} \bigr) &= \frac{ 1 }{ \lvert N \rvert 2 ^{ \lvert N \rvert } } \Biggl( 1 + \frac{ \binom{\lvert N \rvert }{0} + \cdots + \binom{\lvert N \rvert }{ \lvert S \rvert } }{ \binom{\lvert N \rvert - 1 }{ \lvert S \rvert } } \Biggr) v (N), \\
    v _i (S) &= \frac{ 1 }{ \lvert N \rvert 2 ^{ \lvert N \rvert } } \Biggl( 1 - \frac{ \binom{\lvert N \rvert }{0} + \cdots + \binom{\lvert N \rvert }{ \lvert S \rvert } }{ \binom{\lvert N \rvert - 1 }{ \lvert S \rvert } } \Biggr) v (N)     .
  \end{align*}
\end{theorem}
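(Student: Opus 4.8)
The plan is to read everything off \autoref{thm:explicit} and exploit the extreme sparsity of a pure bargaining game. Since $ v (S) = 0 $ for every $ S \neq N $, the marginal $ v \bigl( T \cup \{ i \} \bigr) - v (T) $ appearing in the sum over $ T \subset N \setminus \{ i \} $ vanishes for all $T$ except the unique one satisfying $ T \cup \{ i \} = N $, namely $ T = N \setminus \{ i \} $; there it equals $ v (N) - 0 = v (N) $. So the entire sum in \autoref{thm:explicit} collapses to this single surviving term, and the whole computation reduces to evaluating one fraction of binomial sums at $ T = N \setminus \{ i \} $.

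Next I would compute the graph distance entering that term. For $ S \subset N \setminus \{ i \} $ we have $ S \subset N \setminus \{ i \} $ as sets, so $ S \mathbin{\triangle} ( N \setminus \{ i \} ) = ( N \setminus \{ i \} ) \setminus S $, whence $ \lvert S \mathbin{\triangle} ( N \setminus \{ i \} ) \rvert = \lvert N \rvert - 1 - \lvert S \rvert $. Substituting $ k = \lvert N \rvert - 1 - \lvert S \rvert $ into the surviving term, the numerator becomes the tail sum $ \binom{\lvert N \rvert }{\lvert N \rvert - \lvert S \rvert} + \cdots + \binom{\lvert N \rvert }{\lvert N \rvert} $ and the denominator becomes $ \binom{\lvert N \rvert - 1}{\lvert N \rvert - 1 - \lvert S \rvert} $. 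Applying the symmetries $ \binom{\lvert N \rvert }{\lvert N \rvert - j} = \binom{\lvert N \rvert }{j} $ and $ \binom{\lvert N \rvert - 1}{\lvert N \rvert - 1 - \lvert S \rvert} = \binom{\lvert N \rvert - 1}{\lvert S \rvert} $ rewrites the tail sum as the head sum $ \binom{\lvert N \rvert }{0} + \cdots + \binom{\lvert N \rvert }{\lvert S \rvert} $ and the denominator as $ \binom{\lvert N \rvert - 1}{\lvert S \rvert} $. This yields $ u _i \bigl( S \cup \{ i \} \bigr) = \frac{ v (N) }{ \lvert N \rvert 2 ^{ \lvert N \rvert } } \cdot \frac{ \binom{\lvert N \rvert }{0} + \cdots + \binom{\lvert N \rvert }{\lvert S \rvert} }{ \binom{\lvert N \rvert - 1}{\lvert S \rvert} } $, with $ u _i (S) = - u _i \bigl( S \cup \{ i \} \bigr) $.

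It then remains to evaluate $ u _i ( \emptyset ) $ and subtract. Specializing the $ u _i ( \emptyset ) $ formula of \autoref{thm:explicit} to the single surviving term $ T = N \setminus \{ i \} $ (equivalently, setting $ S = \emptyset $ above, where the fraction is $ \binom{\lvert N \rvert }{0} / \binom{\lvert N \rvert - 1}{0} = 1 $, and using $ u _i ( \emptyset ) = - u _i ( \{ i \} ) $) gives $ u _i ( \emptyset ) = - \frac{ v (N) }{ \lvert N \rvert 2 ^{ \lvert N \rvert } } $. Since $ v _i = u _i - u _i ( \emptyset ) $, adding $ \frac{ v (N) }{ \lvert N \rvert 2 ^{ \lvert N \rvert } } $ to $ u _i \bigl( S \cup \{ i \} \bigr) $ produces the stated $ 1 + (\cdots) $ formula for $ v _i \bigl( S \cup \{ i \} \bigr) $, and adding it to $ u _i (S) = - u _i \bigl( S \cup \{ i \} \bigr) $ produces the $ 1 - (\cdots) $ formula for $ v _i (S) $.

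The only step requiring any care is the binomial reindexing: one must check that reversing the row of Pascal's triangle under $ k \mapsto \lvert N \rvert - 1 - \lvert S \rvert $ sends the tail sum to the head sum and the lower binomial to $ \binom{\lvert N \rvert - 1}{\lvert S \rvert} $. Beyond that bookkeeping, I anticipate no genuine obstacle, as the result is a direct specialization of \autoref{thm:explicit} once the sum over $T$ is seen to reduce to its single nonzero term.
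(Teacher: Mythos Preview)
Your proposal is correct and follows essentially the same route as the paper: reduce the sum in \autoref{thm:explicit} to the single surviving term $T = N \setminus \{ i \}$, compute the distance $\lvert S \mathbin{\triangle} (N \setminus \{ i \}) \rvert = \lvert N \rvert - 1 - \lvert S \rvert$, apply the binomial symmetries to convert the tail sum to a head sum, evaluate $u_i(\emptyset)$, and subtract. The only cosmetic difference is that you obtain $u_i(\emptyset)$ via $u_i(\emptyset) = -u_i(\{i\})$ rather than directly specializing the $u_i(\emptyset)$ formula of \autoref{thm:explicit}, but the computation is the same.
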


\begin{proof}
  Observe that $ v \bigl( T \cup \{ i \} \bigr) - v (T) $ vanishes for
  $ T \subsetneq N \setminus \{ i \} $ and equals $ v (N) $ for
  $ T = N \setminus \{ i \} $. The distance between
  $S \subset N \setminus \{ i \} $ and $ T = N \setminus \{ i \} $ is
  just $ \lvert N \rvert - \lvert S \rvert - 1 $, so applying
  \autoref{thm:explicit} gives
  \begin{align*}
    u _i \bigl( S \cup \{ i \} \bigr)
    &= \frac{ 1 }{ \lvert N \rvert  2 ^{\lvert N \rvert } } \frac{ \binom{\lvert N \rvert }{\lvert N \rvert - \lvert S \rvert } + \cdots + \binom{\lvert N \rvert }{ \lvert N \rvert } }{ \binom{\lvert N \rvert -1}{\lvert N \rvert - \lvert S \rvert - 1 }  } v (N) \\
    &= \frac{ 1 }{ \lvert N \rvert  2 ^{\lvert N \rvert } } \frac{ \binom{\lvert N \rvert }{ 0 } + \cdots + \binom{\lvert N \rvert }{ \lvert S \rvert } }{ \binom{\lvert N \rvert -1}{\lvert S \rvert }  } v (N) ,
  \end{align*}
  and $ u _i (S) = - u _i \bigl( S \cup \{ i \} \bigr) $.  In
  particular, for $ S = \emptyset $, we get
  \begin{equation*}
    u _i ( \emptyset ) = - \frac{ 1 }{ \lvert N \rvert  2 ^{\lvert N \rvert } } v (N),
  \end{equation*}
  and the result follows immediately from
  $ v _i = u _i - u _i ( \emptyset ) $.
\end{proof}

\section{Weighted decompositions and restricted cooperation}

\subsection{Decomposition of cooperative games with weighted edges}
\label{sec:weightedDecomp}

Suppose that each edge $ \bigl( S, S \cup \{ i \} \bigr) \in E $ of
the hypercube graph is assigned a weight
$ w \bigl( S, S \cup \{ i \} \bigr) > 0 $. We define
$ \ell ^2 _w (E) $ to be the space of functions
$E \rightarrow \mathbb{R} $ equipped with the weighted $ \ell ^2 $
inner product,
\begin{equation*}
  \langle f, g \rangle _w  \coloneqq \sum _{ ( S, S \cup \{ i \} ) \in E } w \bigl( S, S \cup \{ i \} \bigr) f \bigl( S, S \cup \{ i \} \bigr) g \bigl( S, S \cup \{ i \} \bigr).
\end{equation*} 
The setting of \autoref{sec:decomp} corresponds to the special case
where $ w = 1 $ on every edge. (Equivalently, $w$ may be any positive
constant, not necessarily $1$.)

The weighted inner product affects the Hodge decomposition as
follows. Although
$ \mathrm{d} \colon \ell ^2 (V) \rightarrow \ell ^2 _w (E) $ is
unchanged,
$ \mathrm{d} ^\ast _w \colon \ell ^2 _w (E) \rightarrow \ell ^2 (V) $
is now the adjoint with respect to the weighted inner product. We then
have the combinatorial Hodge decomposition
\begin{equation*}
  \ell ^2 _w (E) = \mathcal{R} (\mathrm{d}) \oplus \mathcal{N} ( \mathrm{d} _w ^\ast ) , 
\end{equation*} 
where the direct sum is $ \ell ^2 _w $-orthogonal rather than
$ \ell ^2 $-orthogonal.

Denote by
$ P _w \colon \ell ^2 _w (E) \rightarrow \mathcal{R} ( \mathrm{d} ) $
the $ \ell ^2 _w $-orthogonal projection onto
$ \mathcal{R} ( \mathrm{d} ) $.  The decomposition of cooperative
games in \autoref{thm:gameDecomp} may then be generalized as follows.

\begin{theorem}
  \label{thm:weightedDecomp}
  For each $ i \in N $, let $ v _{i, w}  \in \ell ^2 (V) $ with
  $ v _{i,w} ( \emptyset ) = 0 $ be the unique game such that
  $ \mathrm{d} v _{i,w} = P _w \mathrm{d} _i v $. Then:
  \begin{enumerate}[label=(\alph*)]
  \item $ \displaystyle\sum _{ i \in N }v _{i,w}  = v $.
  \item If $ v \bigl( S \cup \{ i \} \bigr) - v (S) = 0 $ for all
    $ S \subset N \setminus \{ i \} $, then $ v _{i,w} = 0 $.
  \item If $ \sigma $ is a permutation of $N$, then
    $ ( \sigma ^\ast v ) _{ i , \sigma ^\ast w } = \sigma ^\ast ( v _{
      \sigma (i) , w } ) $. In particular, if $\sigma$ is the
    permutation swapping $i$ and $j$, and if $ \sigma ^\ast v = v $
    and $ \sigma ^\ast w = w $, then $ v _{i,w} = \sigma ^\ast ( v _{j,w} ) $.
  \item For any two games $ v, v ^\prime $ and $ \alpha , \alpha ^\prime \in \mathbb{R}  $, $ ( \alpha v + \alpha ^\prime v ^\prime ) _{ i , w } = \alpha v _{ i , w } + \alpha ^\prime v ^\prime _{ i , w } $.
  \end{enumerate}
\end{theorem}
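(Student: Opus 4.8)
The plan is to mirror the proof of \autoref{thm:gameDecomp} almost verbatim, replacing the orthogonal projection $P$ by its weighted analogue $P_w$. Three of the four parts require no new ideas: parts (a), (b), and (d) used only the identity $\mathrm{d} = \sum_{i \in N} \mathrm{d}_i$, the linearity of the projection, and the connectedness of the hypercube graph (which forces any element of $\mathcal{N}(\mathrm{d})$ to be constant, hence zero once we normalize at $\emptyset$). None of these ingredients depends on which inner product we place on the edge space, so the same telescoping and ``differ by a constant'' arguments go through with $P$ replaced by $P_w$. I would therefore dispatch (a), (b), and (d) by repeating the corresponding lines of the earlier proof, changing only the subscript on the projection.

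The one part that genuinely changes is the symmetry statement (c), and this is where I expect the main work to lie. In the unweighted case we used that a permutation $\sigma$ preserves counting measure, hence the $\ell^2$ inner product, giving $P \sigma^\ast = \sigma^\ast P$. In the weighted case $\sigma$ no longer preserves $\langle \cdot, \cdot \rangle_w$ unless $\sigma^\ast w = w$, so that identity fails and must be replaced by a weight-twisted version. The key lemma I would establish is the intertwining relation
\[
  P_{\sigma^\ast w}\, \sigma^\ast = \sigma^\ast\, P_w .
\]
To prove it, note that $\sigma$ acts on $V = 2^N$ by $\sigma(S) = \{\sigma(j) : j \in S\}$ and hence permutes the edges of the hypercube; writing $(\sigma^\ast f)(e) = f(\sigma e)$ and $(\sigma^\ast w)(e) = w(\sigma e)$, a one-line reindexing gives $\langle \sigma^\ast f, \sigma^\ast g \rangle_{\sigma^\ast w} = \langle f, g \rangle_w$, so $\sigma^\ast$ is a surjective isometry from $\ell^2_w(E)$ onto $\ell^2_{\sigma^\ast w}(E)$. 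Since $\mathrm{d}\sigma^\ast = \sigma^\ast \mathrm{d}$, this isometry also maps $\mathcal{R}(\mathrm{d})$ onto itself, and therefore carries the $\ell^2_w$-orthogonal complement of $\mathcal{R}(\mathrm{d})$ onto its $\ell^2_{\sigma^\ast w}$-orthogonal complement; a surjective isometry relating two subspaces in this way intertwines the associated orthogonal projections, which is exactly the displayed identity.

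With the intertwining relation in hand, part (c) follows by the same computation as before. Using $\mathrm{d}_i \sigma^\ast = \sigma^\ast \mathrm{d}_{\sigma(i)}$, I would compute
\[
  \mathrm{d}(\sigma^\ast v)_{i, \sigma^\ast w}
  = P_{\sigma^\ast w}\, \mathrm{d}_i(\sigma^\ast v)
  = P_{\sigma^\ast w}\, \sigma^\ast \mathrm{d}_{\sigma(i)} v
  = \sigma^\ast P_w\, \mathrm{d}_{\sigma(i)} v
  = \sigma^\ast \mathrm{d}\, v_{\sigma(i), w}
  = \mathrm{d}\, \sigma^\ast(v_{\sigma(i), w}),
\]
so the two games $(\sigma^\ast v)_{i, \sigma^\ast w}$ and $\sigma^\ast(v_{\sigma(i),w})$ differ by a constant; since both vanish at $\emptyset$ (as $\sigma(\emptyset) = \emptyset$), that constant is zero. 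The special case of the transposition swapping $i$ and $j$, under the hypotheses $\sigma^\ast v = v$ and $\sigma^\ast w = w$, then yields $v_{i,w} = \sigma^\ast(v_{j,w})$ at once. I would remark that, unlike \autoref{thm:gameDecomp}, no Shapley-value conclusion is claimed here, precisely because a general weighting breaks the symmetry axiom; the content of the theorem is to isolate which structural properties survive reweighting.
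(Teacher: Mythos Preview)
Your proposal is correct and matches the paper's own proof essentially line for line: (a), (b), (d) are reduced to \autoref{thm:gameDecomp} since $P_w$ is linear and restricts to the identity on $\mathcal{R}(\mathrm{d})$, and (c) is handled via the intertwining relation $P_{\sigma^\ast w}\,\sigma^\ast = \sigma^\ast P_w$ derived from $\langle \sigma^\ast f, \sigma^\ast g\rangle_{\sigma^\ast w} = \langle f, g\rangle_w$. Your justification of the intertwining step is in fact more detailed than the paper's, which simply asserts that the inner-product identity ``implies'' it.
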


\begin{proof}
  The proofs of (a), (b), and (d) are just as in
  \autoref{thm:gameDecomp}, since the weighted projection $ P _w $ is
  still linear and equal to the identity on
  $ \mathcal{R} ( \mathrm{d} ) $.

  For (c), we can no longer assume that a permutation preserves the
  $ \ell ^2 _w $ inner product. However, we do have
  $ \langle f, g \rangle _w = \langle \sigma ^\ast f , \sigma ^\ast g
  \rangle _{ \sigma ^\ast w } $, which implies
  $ P _{ \sigma ^\ast w } \sigma ^\ast = \sigma ^\ast P _w $. Therefore,
  \begin{multline*}
    \mathrm{d} ( \sigma ^\ast v ) _{ i , \sigma ^\ast w } = P _{ \sigma ^\ast w } \mathrm{d} _i ( \sigma ^\ast v ) = P _{ \sigma ^\ast w } \sigma ^\ast ( \mathrm{d} _{ \sigma (i) } v ) \\
    = \sigma ^\ast ( P _w \mathrm{d} _{ \sigma (i) } v ) = \sigma ^\ast ( \mathrm{d} v _{ \sigma (i) , w } ) = \mathrm{d} \sigma ^\ast ( v _{ \sigma (i) , w } ),
  \end{multline*}
  and the rest of the argument proceeds as in the proof of
  \autoref{thm:gameDecomp}.
  \end{proof}
  
  \begin{corollary}
    \label{cor:weightedShapley}
    If $ \sigma ^\ast w = w $ for all permutations $\sigma$, then
    $ \phi _i (v) = v _{ i , w } (N) $.
  \end{corollary}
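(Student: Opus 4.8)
The plan is to reduce the statement to the uniqueness half of \autoref{thm:shapley}, exactly as in the final step of \autoref{thm:gameDecomp}: I will check that the allocation $v \mapsto \bigl( v_{i,w}(N) \bigr)_{i \in N}$ satisfies all four Shapley axioms, whence it must coincide with $\phi_i(v)$. Three of the four axioms come for free from \autoref{thm:weightedDecomp} and require no use of the symmetry hypothesis on $w$. Indeed, property (a) gives $\sum_{i \in N} v_{i,w}(N) = v(N)$, which is efficiency; property (b) gives $v_{i,w}(N) = 0$ for a null player, which is the null-player property; and property (d), evaluated at the grand coalition, gives linearity of the map $v \mapsto v_{i,w}(N)$.

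The entire force of the hypothesis $\sigma^\ast w = w$ is concentrated in verifying the symmetry axiom, which is the one place the argument of \autoref{thm:gameDecomp} must be adapted. First I would translate the Shapley symmetry hypothesis into the language of permutations: if $v\bigl(S \cup \{i\}\bigr) = v\bigl(S \cup \{j\}\bigr)$ for all $S \subset N \setminus \{i,j\}$, and $\sigma$ is the transposition swapping $i$ and $j$, then a short case check (on whether a coalition $S$ contains $i$, $j$, both, or neither) shows $\sigma^\ast v = v$. With the standing assumption $\sigma^\ast w = w$, property (c) of \autoref{thm:weightedDecomp} then specializes to $v_{i,w} = \sigma^\ast\bigl(v_{j,w}\bigr)$. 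Evaluating at the grand coalition and using $\sigma(N) = N$ gives $v_{i,w}(N) = \bigl(\sigma^\ast v_{j,w}\bigr)(N) = v_{j,w}\bigl(\sigma(N)\bigr) = v_{j,w}(N)$, which is precisely the symmetry axiom.

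Having established all four axioms, the uniqueness clause of \autoref{thm:shapley} forces $v_{i,w}(N) = \phi_i(v)$ for every $i$. I expect the only genuine obstacle to be the symmetry step, and even there the difficulty is purely bookkeeping: one must notice that property (c) in its general form carries a $\sigma^\ast w$ rather than $w$, so the hypothesis $\sigma^\ast w = w$ is exactly what is needed to close the loop. It is worth remarking that this hypothesis cannot be dropped---for a non-symmetric weighting the grand-coalition values $v_{i,w}(N)$ will in general violate the symmetry axiom and hence differ from the classical Shapley value.
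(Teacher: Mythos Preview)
Your proposal is correct and follows essentially the same route as the paper: deduce the four Shapley axioms from properties (a)--(d) of \autoref{thm:weightedDecomp}, using the hypothesis $\sigma^\ast w = w$ precisely to make property (c) yield the symmetry axiom, and then invoke uniqueness from \autoref{thm:shapley}. The paper's own proof is a one-sentence version of this; your write-up simply spells out the details (in particular, the case check that Shapley-symmetry of $i$ and $j$ forces $\sigma^\ast v = v$ for the transposition $\sigma$), which the paper leaves implicit.
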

  
  \begin{proof}
    If $w$ is invariant under permutations, then (a)--(d) imply that
    $ v _{ i , w } (N) $ satisfies the Shapley axioms, so it must be the
    Shapley value $ \phi _i (v) $.
  \end{proof}

  As in \autoref{rmk:leastSquares}, we may view the component $ v _i $
  as a \emph{weighted} least-squares solution to
  $ \mathrm{d} v _i = \mathrm{d} _i v $, in the sense that
  \begin{equation*}
    v _i = \operatorname{argmin}\displaylimits_{\substack{u \in \ell ^2 (V) \\ u ( \emptyset ) = 0 }} \lVert \mathrm{d} u - \mathrm{d} _i v \rVert _{ \ell ^2 _w (E) } .
  \end{equation*}
  We can also cast this in terms of the weighted graph Laplacians
  $ L _w \coloneqq \mathrm{d} ^\ast _w \mathrm{d} $ and
  $ L _{ w _i } \coloneqq \mathrm{d} ^\ast _w \mathrm{d} _i =
  \mathrm{d} ^\ast _{ w _i } \mathrm{d} $, where the weight function
  $ w _i $ is defined by
  \begin{equation*}
    w _i \bigl( S, S \cup \{ j \} \bigr) \coloneqq
    \begin{cases}
      w \bigl( S, S \cup \{ i \} \bigr) & \text{if } i = j ,\\
      0 & \text{if } i \neq j .
    \end{cases}
  \end{equation*}
  The following generalization of \autoref{prop:laplace} is stated
  without proof, since the proof is essentially identical. It can also
  be seen as an expression of the normal equations for the weighted
  least-squares problem.

  \begin{proposition}
  \label{prop:weightedLaplace}
  For each $ i \in N $, the component game $ v _i $ of
  \autoref{thm:weightedDecomp} is the unique solution to
  $ L _w v _i = L _{w _i} v $ such that $ v _i ( \emptyset ) = 0 $.
\end{proposition}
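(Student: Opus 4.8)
The plan is to mirror the proof of \autoref{prop:laplace}, replacing the Euclidean adjoint and Laplacian throughout by their weighted counterparts $\mathrm{d}^\ast_w$ and $L_w$, and to check that the single step in which positivity of the inner product is used still goes through. Write $v_i$ for the component game $v_{i,w}$ of \autoref{thm:weightedDecomp}, so that by construction $\mathrm{d} v_i = P_w \mathrm{d}_i v$.

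First I would establish that $v_i$ solves $L_w v_i = L_{w_i} v$. Since $P_w$ is the $\ell^2_w$-orthogonal projection onto $\mathcal{R}(\mathrm{d})$, the weighted Hodge decomposition $\ell^2_w(E) = \mathcal{R}(\mathrm{d}) \oplus \mathcal{N}(\mathrm{d}^\ast_w)$ shows that the residual $\mathrm{d} v_i - \mathrm{d}_i v = -(I - P_w)\mathrm{d}_i v$ lies in $\mathcal{N}(\mathrm{d}^\ast_w)$. Applying $\mathrm{d}^\ast_w$ therefore annihilates it:
\[
  0 = \mathrm{d}^\ast_w(\mathrm{d} v_i - \mathrm{d}_i v) = \mathrm{d}^\ast_w \mathrm{d} v_i - \mathrm{d}^\ast_w \mathrm{d}_i v = L_w v_i - L_{w_i} v,
\]
using the definitions $L_w = \mathrm{d}^\ast_w \mathrm{d}$ and $L_{w_i} = \mathrm{d}^\ast_w \mathrm{d}_i$. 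This is exactly the computation of \autoref{prop:laplace}, now with $\mathrm{d}^\ast_w$ in place of $\mathrm{d}^\ast$.

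Next I would treat uniqueness. Given a second solution $v_i^\prime$, the difference $u = v_i - v_i^\prime$ satisfies $L_w u = \mathrm{d}^\ast_w \mathrm{d} u = 0$. Pairing against $u$ in $\ell^2(V)$ and moving $\mathrm{d}^\ast_w$ across by its defining adjoint relation gives $\langle L_w u, u \rangle = \langle \mathrm{d} u, \mathrm{d} u \rangle_w = \lVert \mathrm{d} u \rVert_{\ell^2_w(E)}^2 = 0$. Here is the one place the weighted setting genuinely differs from \autoref{prop:laplace}: I must invoke that every edge weight $w(S, S \cup \{i\})$ is strictly positive, so that $\langle \cdot, \cdot \rangle_w$ is positive definite and $\lVert \mathrm{d} u \rVert_w = 0$ forces $\mathrm{d} u = 0$. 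Connectedness of the hypercube graph then makes $u$ constant, and $v_i(\emptyset) = v_i^\prime(\emptyset) = 0$ forces that constant to vanish, whence $v_i = v_i^\prime$.

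I expect no serious obstacle, since the argument is structurally identical to \autoref{prop:laplace}. The only point needing attention is that the weighted Laplacian $L_w$ still has kernel exactly the constants, i.e.\ $\mathcal{N}(L_w) = \mathcal{N}(\mathrm{d}) \cong \mathbb{R}$; this is precisely what the positivity-of-$w$ step secures, and it is the reason the hypothesis $w > 0$ was imposed in \autoref{sec:weightedDecomp}. Everything else---existence via the residual lying in $\mathcal{N}(\mathrm{d}^\ast_w)$, and the normalization at $\emptyset$---transfers verbatim from the unweighted case.
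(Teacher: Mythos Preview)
Your proposal is correct and follows exactly the approach the paper intends: the paper in fact omits the proof of \autoref{prop:weightedLaplace} entirely, stating only that it is ``essentially identical'' to that of \autoref{prop:laplace}, and your write-up faithfully carries out that identical argument with $\mathrm{d}^\ast_w$ and $L_w$ in place of $\mathrm{d}^\ast$ and $L$. Your explicit verification that $\mathcal{N}(L_w)=\mathcal{N}(\mathrm{d})$ via positivity of $w$ is the one detail the paper leaves implicit, and you have handled it correctly.
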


  One consequence of \autoref{cor:weightedShapley} is that, although
  the Shapley value is unique, the decomposition
  $ v = \sum _{ i \in N } v _i $ of \autoref{thm:gameDecomp} generally
  is not. Indeed, any totally symmetric weight function $w$ will yield
  a decomposition satisfying the conditions of
  \autoref{thm:gameDecomp}, and these will generally not agree with
  one another---except at $N$, where they all give the Shapley
  value. This is illustrated in the following example.

\begin{example}
  \label{ex:symmetricGlove}

  In \autoref{ex:decompGlove}, we decomposed the glove game with
  respect to the $ \ell ^2 $ inner product, corresponding to the
  constant weight function $ w \equiv 1 $.  Suppose instead that we
  take $ w \bigl( S, S \cup \{ i \} \bigr) = \lvert S \rvert + 1 $,
  which is totally symmetric but not constant. The resulting decomposition is
  shown in \autoref{tab:symmetricGlove}, and is distinct from that
  obtained in \autoref{ex:decompGlove} and shown in
  \autoref{tab:glove}. However, due to the symmetry of $w$, the
  Shapley values are again recovered as the value of the grand
  coalition in each component game. Note that the symmetry of players
  $2$ and $3$ is still apparent in the component games.

  \begin{table}
    \centering
    \renewcommand\arraystretch{1.4}
  \begin{tabular}{crrrr}
    $S$ & $v $ & $ v _1 $ & $ v _2 $ & $ v _3 $ \\
    \hline
    $ \emptyset $ & $0$ & $0$ & $0$ & $0 $ \\
    $ \{ 1 \} $ & $0$ & $ \frac{ 16 }{ 31 } $ & $ -\frac{ 8 }{ 31 } $ & $ -\frac{ 8 }{ 31 } $ \\
    $ \{ 2 \} $ & $0$ & $ -\frac{ 8 }{ 31 } $ & $ \frac{ 6 }{ 31 } $ &  $ \frac{ 2 }{ 31 } $ \\
    $ \{ 3 \} $ & $0$ & $ - \frac{ 8 }{ 31 } $ & $ \frac{ 2 }{ 31 } $ &  $ \frac{ 6 }{ 31 } $ \\
    $ \{ 1, 2 \} $ & $1$ & $ \frac{ 20 }{ 31 } $ & $ \frac{ 21 }{ 62 } $ &  $ \frac{ 1 }{ 62 } $ \\
    $ \{ 1, 3 \} $ & $1$ & $ \frac{ 20 }{ 31 } $ & $ \frac{ 1 }{ 62 } $ &  $ \frac{ 21 }{ 62 } $ \\
    $ \{ 2, 3 \} $ & $0$ & $ -\frac{ 9 }{ 31 } $ & $ \frac{ 9 }{ 62 } $ &  $ \frac{ 9 }{ 62 } $ \\
    $ \{ 1,2,3 \} $ & $1$ & $ \boldsymbol{ \frac{ 2 }{ 3 } } $ & $ \boldsymbol{ \frac{ 1 }{ 6 } } $ & $ \boldsymbol{ \frac{ 1 }{ 6 } } $
  \end{tabular}
  \bigskip 
  \caption{%
    Decomposition of the three-player glove game as
    $ v = v _1 + v _2 + v _3 $, following
    \autoref{thm:weightedDecomp}, with weight function
    $ w (S) = \lvert S \rvert + 1 $. The Shapley values of
    $ \frac{ 2 }{ 3 } $, $ \frac{ 1 }{ 6 } $, $ \frac{ 1 }{ 6 } $
    appear in bold on the last line, since the weight function is
    totally symmetric, but the components are elsewhere distinct from those in
    \autoref{tab:glove}.\label{tab:symmetricGlove} }
\end{table}
\end{example}
  
\begin{example}
  Again, consider the glove game, but take the weight function to be
  $ w \bigl( \emptyset , \{ 1 \} \bigr) = \frac{1}{2} $ and $ w = 1 $
  otherwise. This may be interpreted as player $1$ being reluctant
  (but not totally unwilling) to be the first player to join the
  coalition. The resulting decomposition is shown in
  \autoref{tab:reluctantGlove}. Unlike in the previous examples, this
  $w$ is not totally symmetric, and consequently the values
  $ v _1 (N) = \frac{ 13 }{ 17 } $ and
  $ v _2 (N) = v _3 (N) = \frac{ 2 }{ 17 } $ no longer agree with the
  Shapley values. Since player $1$ is less willing to join the
  coalition first (i.e., to contribute zero marginal value), the
  payoff to player $1$ is increased from $ \frac{ 2 }{ 3 } $ to
  $ \frac{ 13 }{ 17 } $ at the expense of players $2$ and $3$, the
  payoff to each of whom is reduced from $ \frac{ 1 }{ 6 } $ to
  $ \frac{ 2 }{ 17 } $. Note that the symmetry of players $2$ and $3$
  is still maintained.

  \begin{table}
    \centering
    \renewcommand\arraystretch{1.4}
  \begin{tabular}{crrrr}
    $S$ & $v $ & $ v _1 $ & $ v _2 $ & $ v _3 $ \\
    \hline
    $ \emptyset $ & $0$ & $0$ & $0$ & $0 $ \\
    $ \{ 1 \} $ & $0$ & $ \frac{ 10 }{ 17 } $ & $ - \frac{ 5 }{ 17 } $ & $ - \frac{ 5 }{ 17 } $ \\
    $ \{ 2 \} $ & $0$ & $ - \frac{ 5 }{ 34 } $ & $ \frac{ 37 }{ 272 } $ & $ \frac{ 3 }{ 272 } $ \\
    $ \{ 3 \} $ & $0$ & $ - \frac{ 5 }{ 34 } $ & $ \frac{ 3 }{ 272 } $ & $ \frac{ 37 }{ 272 } $ \\
    $ \{ 1, 2 \} $ & $1$ & $ \frac{ 25 }{ 34 } $  & $  \frac{ 87 }{ 272 } $ & $ - \frac{ 15 }{ 272 } $\\
    $ \{ 1, 3 \} $ & $1$ & $ \frac{ 25 }{ 34 } $ & $ - \frac{ 15 }{ 272 } $ & $  \frac{ 87 }{ 272 } $ \\
    $ \{ 2, 3 \} $ & $0$ & $ - \frac{ 3 }{ 17 } $ & $ \frac{ 3 }{ 34 } $ & $ \frac{ 3 }{ 34 } $ \\
    $ \{ 1,2,3 \} $ & $1$ & $ \boldsymbol{ \frac{ 13 }{ 17 }  } $ & $ \boldsymbol{ \frac{ 2 }{ 17 } } $ & $ \boldsymbol{ \frac{ 2 }{ 17 }  } $
  \end{tabular}
  \bigskip 
  \caption{ Decomposition of the three-player glove game as
    $ v = v _1 + v _2 + v _3 $, following
    \autoref{thm:weightedDecomp}, with weight function
    $ w \bigl( \emptyset , \{ 1 \} \bigr) = \frac{1}{2} $ and
    $ w = 1 $ otherwise. The bold values $ v _i (N) $ on the last line
    no longer correspond to the Shapley values, since $w$ is not
    totally symmetric.\label{tab:reluctantGlove} }
\end{table}
\end{example}

\begin{remark}
  The weighted Shapley value of \citet{Shapley1953a} and
  \citet{KaSa1987} also models asymmetry between players, but it does
  so in a fundamentally different way from the approach considered
  here. In \citet{Shapley1953a} each \emph{player} is assigned a
  weight, while \citet{KaSa1987} generalized these player weights to
  ``weight systems.'' It would be interesting to investigate whether
  these approaches can be related by using player weights (or weight
  systems) to construct corresponding edge weights.
\end{remark}

\subsection{Decomposition of games with restricted cooperation}

The framework discussed in the preceding sections, as in
\citet{Shapley1953}, assumes that every player $ i \in N $ is willing
to join every coalition $ S \subset N $, so every such coalition may
be feasibly formed \emph{en route} to the grand coalition. In models
of restricted cooperation, however, this is not the case. The
\emph{precedence constraints} of \citet{FaKe1992} impose a partial
ordering on $N$, so that some players are constrained to join the
coalition prior to others. \citet{KhSeTa2016} have recently
generalized this to so-called digraph games, where precedence is
determined by a digraph on $N$ that (unlike the \citet{FaKe1992}) may
contain cycles; a player $i$ may be required to precede another player
$j$ in some coalitions but not others. (For another recent model of
restricted cooperation, see \citet{KoSuTa2017}.)

The constraints above all correspond to situations where a player $i$
is forbidden to join a coalition $ S \subset N \setminus \{ i \} $. In
this case, we say that the edge $ \bigl( S , S \cup \{ i \} \bigr) $
is \emph{infeasible}, and we remove it from the hypercube graph. If we
continue in this manner, removing all edges and vertices that are
incompatible with the constraints, then we arrive at a graph
$ G = ( V , E ) $ which is a \emph{subgraph} of the hypercube
graph. Here, $V$ contains the so-called \emph{feasible coalitions}
that are compatible with the constraints.

Assume that $G$ is connected and that $ \emptyset , N \in V $, so that
a coalition is feasible if and only if it can be formed starting from
$ \emptyset $, and the grand coalition is feasible. Since the Hodge
decomposition may be defined on any graph---in particular, on the
subgraph $G$ of the hypercube graph---we again obtain a decomposition
$ v = \sum _{ i \in N } v _i $, defined by
$ \mathrm{d} v _i = P \mathrm{d} _i v $ with
$ v _i ( \emptyset ) = 0 $ for $ i \in N $. Since $G$ is connected, we
again have $ \mathcal{N} ( \mathrm{d} ) \cong \mathbb{R} $, so the
decomposition is unique; moreover, it satisfies conditions (a)--(d) of
\autoref{thm:weightedDecomp}, if we interpret the missing edges as
having weight zero.

\begin{example}
  In the glove game, suppose that player $1$ refuses to join the
  coalition first, so that $ \{ 1 \} $ and all its incident edges are
  removed from the graph. The resulting decomposition is shown in
  \autoref{tab:holdout1Glove}. Note that $ v _1 = v $ and
  $ v _2 = v _3 = 0 $, so that player $1$ captures \emph{all} of the
  value of the game. The reason for this is that, by removing the only
  edges on which players $2$ and $3$ contribute marginal value, the
  constraints have turned players $2$ and $3$ into null
  players. Observe also that $ \mathrm{d} v _i = \mathrm{d} _i v $ for
  all $ i \in N $, so the constraints have effectively made the game
  inessential.
  
  \begin{table}
    \centering
    \renewcommand\arraystretch{1.4}
    \begin{tabular}{crrrr}
      $S$ & $v $ & $ v _1 $ & $ v _2 $ & $ v _3 $ \\
      \hline
      $ \emptyset $ & $0$ & $0$ & $0$ & $0 $ \\
      $ \{ 2 \} $ & $0$ & $0$ & $0$ & $0$\\
      $ \{ 3 \} $ & $0$ & $0$ & $0$ & $0$ \\
      $ \{ 1, 2 \} $ & $1$ & $1$ & $0$ & $0$ \\
      $ \{ 1, 3 \} $ & $1$ & $1$ & $0$ & $0$ \\
      $ \{ 2, 3 \} $ & $0$ & $0$ & $0$ & $0$ \\
      $ \{ 1,2,3 \} $ & $1$ & $ \boldsymbol{ 1  } $ & $ \boldsymbol{ 0 } $ & $ \boldsymbol{ 0 } $
    \end{tabular}
    \bigskip 
    \caption{Decomposition of the three-player glove game as
      $ v = v _1 + v _2 + v _3 $, where $ \{ 1 \} $ and its incident edges
      are removed from the hypercube graph. This causes players $2$ and $3$ to become null players, so player $1$ receives all the value, and the game becomes inessential.
      \label{tab:holdout1Glove}}
  \end{table}
\end{example}

\begin{example}
  \label{ex:holdout2Glove}
  On the other hand, suppose that player $ \{ 2 \} $ refuses to join
  the coalition first, so that $ \{ 2 \} $ and all its incident edges
  are removed from the graph. The resulting decomposition is shown in
  \autoref{tab:holdout2Glove}. Unlike in the previous example, all
  three players still contribute marginal value on some of the
  remaining feasible edges. However, removing an edge on which player
  $2$ contributes zero marginal value causes the payoff to player $2$
  to increase from $ \frac{ 1 }{ 6 } $ to $ \frac{ 3 }{ 10 }
  $. Interestingly, player $3$ also receives a slightly increased
  payoff, from $ \frac{ 1 }{ 6 } $ to $ \frac{ 1 }{ 5 } $, since
  player $3$ contributes zero marginal value to any coalition that
  already contains player $2$, and one such coalition has been removed
  from consideration. Both players $2$ and $3$ benefit at the expense
  of player $1$, whose payoff is decreased from $ \frac{ 2 }{ 3 } $ to
  $ \frac{1}{2} $.
  \begin{table}
     \centering
     \renewcommand\arraystretch{1.4}
     \begin{tabular}{crrrr}
       $S$ & $v $ & $ v _1 $ & $ v _2 $ & $ v _3 $ \\
       \hline
       $ \emptyset $ & $0$ & $0$ & $0$ & $0 $ \\
       $ \{ 1 \} $ & $0$ & $ \frac{ 3 }{ 10 } $ & $ - \frac{ 1 }{ 10 } $ & $ - \frac{ 1 }{ 5 } $ \\
       $ \{ 3 \} $ & $0$ & $ - \frac{ 3 }{ 10 } $ & $ \frac{ 1 }{ 10 } $ & $ \frac{ 1 }{ 5 } $ \\
       $ \{ 1, 2 \} $ & $1$ & $ \frac{ 2 }{ 5 } $ & $ \frac{ 3 }{ 5 } $ & $0$ \\
       $ \{ 1, 3 \} $ & $1$ & $ \frac{1}{2} $ & $ \frac{ 1 }{ 10 } $ & $ \frac{ 2 }{ 5 } $ \\
       $ \{ 2, 3 \} $ & $0$ & $ - \frac{ 2 }{ 5 } $ & $ \frac{ 1 }{ 5 } $ & $ \frac{ 1 }{ 5 } $ \\
       $ \{ 1,2,3 \} $ & $ 1 $ & $ \boldsymbol{ \frac{1}{2} } $ & $ \boldsymbol{ \frac{ 3 }{ 10 } } $ & $ \boldsymbol{ \frac{ 1 }{ 5 } } $ 
     \end{tabular}
     \bigskip 
     \caption{Decomposition of the three-player glove game as
      $ v = v _1 + v _2 + v _3 $, where $ \{ 2 \} $ and its incident edges
      are removed from the hypercube graph.\label{tab:holdout2Glove} }
   \end{table}
 \end{example}

 \begin{remark}
   \label{rmk:precedence}
   The values obtained in \autoref{ex:holdout2Glove} are different
   from the Shapley values with precedence constraints in
   \citet{FaKe1992,KhSeTa2016}, which are
   $ ( \frac{1}{2}, \frac{ 1 }{ 4 } , \frac{ 1 }{ 4 } ) $. These take
   the approach of averaging over feasible permutations, a
   generalization of \eqref{eqn:shapleyPermutation}. In this case,
   there are four feasible permutations---$ ( 1, 2, 3 ) $,
   $ ( 1, 3, 2 ) $, $ ( 3, 1, 2 ) $, and $ ( 3, 2, 1 ) $---and player
   $1$ contributes marginal value $1$ in two of these, while players
   $2$ and $3$ each contribute marginal value $1$ in one
   permutation. This illustrates that, unlike the case in
   \autoref{sec:decomp}, these solution concepts are generally
   distinct.

   (We also note that, as pointed out in \autoref{sec:introGames}, it
   is computationally undesirable to average over permutations, since
   this is factorial in $ \lvert N \rvert $, whereas solving a
   $ \lvert V \rvert \times \lvert V \rvert $ linear system is only
   exponential in $ \lvert N \rvert $.)
 \end{remark}

 Finally, we note that we may also consider \emph{weighted} subgraphs
 of the hypercube graph, combining the approach above with that of
 \autoref{sec:weightedDecomp}, in the obvious way. The next example
 illustrates that if we weight the edges according to the degrees of
 the incident vertices---which are generally non-constant once edges
 have been removed---we may obtain different decompositions of the
 game than if we used a constant edge weight.

 \begin{example}
   Consider again the restricted glove game of
   \autoref{ex:holdout2Glove}, where player $2$ refuses to join the
   coalition first, so that $ \{ 2 \} $ and its incident edges are
   removed from the hypercube graph. Instead of taking the edge
   weights $ w \equiv 1 $, suppose we take
   $ w ( a, b ) = \mathrm{deg}(a) \mathrm{deg}(b) $. This weight
   function is related to a graph Laplacian with vertex weights
   considered by \citet{Lovasz1979}, as observed by \citet{ChLa1996}.

   The resulting decomposition is shown in
   \autoref{tab:holdout2GloveWeight}.  In this case, the players
   receive the payoffs
   $ ( \frac{1}{2} , \frac{ 1 }{ 4 } , \frac{ 1 }{ 4 } ) $, which is
   distinct from the constant-weight payoffs
   $ ( \frac{1}{2} , \frac{ 3 }{ 10 } , \frac{ 2 }{ 5 } ) $ of
   \autoref{ex:holdout2Glove}. (This also happens to agree with the
   Shapley values with precedence constraints in
   \autoref{rmk:precedence}, although this need not be the case for an
   arbitrary game.)  Note that, although $ v _2 (N) = v _3 (N) $, the
   component games $ v _2 $ and $ v _3 $ display asymmetry elsewhere:
   e.g.,
   $ v _2 \bigl( \{ 2 , 3 \} \bigr) \neq v _3 \bigl( \{ 2, 3 \} \bigr)
   $. This is due to the asymmetry between players $2$ and $3$ in the
   graph.
   
  \begin{table}
     \centering
     \renewcommand\arraystretch{1.4}
     \begin{tabular}{crrrr}
       $S$ & $v $ & $ v _1 $ & $ v _2 $ & $ v _3 $ \\
       \hline
       $ \emptyset $ & $0$ & $0$ & $0$ & $0 $ \\
       $ \{ 1 \} $ & $0$ & $ \frac{ 1 }{ 3 } $ & $ - \frac{ 1 }{ 12 } $ & $ - \frac{ 1 }{ 4 } $ \\
       $ \{ 3 \} $ & $0$ & $ - \frac{ 1 }{ 3 } $ & $ \frac{ 1 }{ 12 } $ & $ \frac{ 1 }{ 4 } $ \\
       $ \{ 1, 2 \} $ & $1$ & $ \frac{ 5 }{ 12 } $ & $ \frac{ 7 }{ 12 } $ & $0$ \\
       $ \{ 1, 3 \} $ & $1$ & $ \frac{1}{2} $ & $ \frac{ 1 }{ 12 } $ & $ \frac{ 5 }{ 12 } $ \\
       $ \{ 2, 3 \} $ & $0$ & $ - \frac{ 5 }{ 12 } $ & $ \frac{ 1 }{ 6 } $ & $ \frac{ 1 }{ 4 } $ \\
       $ \{ 1,2,3 \} $ & $ 1 $ & $ \boldsymbol{ \frac{1}{2} } $ & $ \boldsymbol{ \frac{ 1 }{ 4 }  } $ & $ \boldsymbol{ \frac{ 1 }{ 4 }  } $ 
     \end{tabular}
     \bigskip 
     \caption{Decomposition of the three-player glove game as
       $ v = v _1 + v _2 + v _3 $, where $ \{ 2 \} $ and its incident
       edges are removed from the hypercube graph, and where each edge
       is weighted by the product of the degrees of its incident
       vertices.\label{tab:holdout2GloveWeight} }
   \end{table}
 \end{example}

 \section{Conclusion}

 We have used the combinatorial Hodge decomposition on a hypercube
 graph to show that any cooperative game may be decomposed into a sum
 of component games, one component for each player, so that this
 decomposition satisfies appropriate efficiency, null-player,
 symmetry, and linearity properties. This yields a new
 characterization of the Shapley value as the value of the grand
 coalition in each player's component game.

 We have also shown that this game decomposition may be understood in
 terms of the least-squares solution to a linear problem, where the
 solution is exact if and only if the game is inessential. In this
 sense, our decomposition may be considered as an edge-based (rather
 than vertex-based) variant of the least-squares and minimum-norm
 solution concepts of \citet{RuVaZa1998} and \citet{KuSa2007}. The
 normal equations for this linear problem yield another, equivalent
 characterization of the game decomposition in terms of the
 well-studied graph Laplacian. This allowed us to obtain an explicit
 formula for each component game by applying the discrete Green's
 function for the hypercube graph Laplacian.

 Finally, we have shown how this decomposition may be generalized, in
 a natural way, using the combinatorial Hodge decomposition for
 weighted graphs and subgraphs of the hypercube graph. These
 generalized decompositions preserve the efficiency, null-player,
 symmetry (in an appropriate sense, modulo the symmetry of the weights
 and the subgraph), and linearity properties obtained earlier. This
 yields a family of decompositions, and corresponding solution
 concepts, for problems where players exhibit variable willingness or
 unwillingness to join certain coalitions, and we have compared and
 contrasted these solution concepts with those of
 \citet{FaKe1992,KhSeTa2016} for certain models of restricted
 cooperation.

 \subsection*{Acknowledgments}

 Ari Stern was supported in part by a grant from the Simons Foundation
 (\#279968). Alexander Tettenhorst was supported in part by the ARTU
 research fellowship program in the Department of Mathematics at
 Washington University in St.~Louis.

\end{document}